\renewcommand{\natural}{{\mathbb{N}}} 
\newcommand{\real}{{\mathbb{R}}}
\newcommand{\map}[3]{#1: #2 \rightarrow #3}
\newcommand{\until}[1]{\{1,\ldots,#1\}}
\newcommand{\EE}{\mathcal{E}} 
\newcommand{\GG}{\mathcal{G}}\newcommand{\LL}{\mathcal{L}}
\newcommand{\subj}{\textnormal{subj. to}}
\newcommand{\argmax}{\mathop{\rm argmax}}
\newcommand{\gen}{\texttt{gen}}
\newcommand{\GEN}{\texttt{GEN}}
\newcommand{\stor}{\texttt{stor}}
\newcommand{\STOR}{\texttt{STOR}}
\newcommand{\cload}{\texttt{conl}}
\newcommand{\CLOAD}{\texttt{CONL}}
\newcommand{\des}{\texttt{des}}
\newcommand{\trade}{\texttt{tr}}
\newcommand{\nbrs}{\mathcal{N}}
\newcommand{\relint}{\mathop{\bf relint}} %
\newcommand{\1}{\mathbf{1}}
\newcommand{\0}{\mathbf{0}}
\renewcommand{\inf}{\operatornamewithlimits{inf\vphantom{p}}}
\renewcommand{\lim}{\operatornamewithlimits{lim\vphantom{p}}}
\newcommand{\StatexIndent}[1][3]{%
  \setlength\@tempdima{\algorithmicindent}%
  \Statex\hskip\dimexpr#1\@tempdima\relax}
\algnewcommand{\algorithmicgoto}{\textbf{go to }}%
\algnewcommand{\Goto}[1]{\algorithmicgoto Line~\ref{#1}}%
\algnewcommand{\Label}{\State\unskip}
\newtheorem{theorem}{Theorem}[section]
\newtheorem{proposition}[theorem]{Proposition}
\newtheorem{corollary}[theorem]{Corollary}
 \newtheorem{lemma}[theorem]{Lemma}
\newtheorem{remark}[theorem]{Remark}
\newtheorem{assumption}[theorem]{Assumption}
\newcommand\oprocendsymbol{\hbox{$\square$}}
\newcommand\oprocend{\relax\ifmmode\else\unskip\hfill\fi\oprocendsymbol}
\def \algname/{Relaxation and Successive Distributed Decomposition}
\def \algacronym/{RSDD}
\newcommand{\sx}[1]{\mathbf{x}_{#1}}
\newcommand{\bx}{\mathbf{x}}
\newcommand{\bg}{\mathbf{g}}
\newcommand{\bmu}{\boldsymbol{\mu}}
\newcommand{\barsx}[1]{\bar{\mathbf{x}}_{#1}}
\newcommand{\smu}[1]{\boldsymbol{\mu}_{#1}}
\newcommand{\slambda}[1]{\boldsymbol{\lambda}_{#1}}
\newcommand{\tildeslambda}[1]{\boldsymbol{\widetilde{\lambda}}{}_{#1}}
\newcommand{\sLambda}{\boldsymbol{\Lambda}}
\newcommand{\tildesLambda}{\boldsymbol{\widetilde{\Lambda}}}
\newcommand{\domainEta}{D_{\eta}}
\newcommand{\domainEtaRelaxed}{D_{\eta }^{ \texttt{NR} }}
\definecolor{blue@O4S}{RGB}{0, 41, 69}
\definecolor{emph@O4S}{RGB}{0, 93, 137}
\definecolor{red@O4S}{RGB}{127,0,0}
\definecolor{gray@O4S}{RGB}{112, 112, 112}
\newcommand \ubar[1]{%
  \underaccent{\bar}{#1}}
\begin{document}

\title{
  Constraint Coupled Distributed Optimization:\\
  a Relaxation and Duality Approach
}

\author{Ivano Notarnicola,~\IEEEmembership{Student Member,~IEEE} 
    and Giuseppe Notarstefano,~\IEEEmembership{Member,~IEEE} 
    \thanks{A preliminary short version of this paper 
    has appeared as~\cite{notarnicola2017duality}: 
    the current article includes a much improved comprehensive treatment, all the 
    theoretical proofs and a numerical example on a microgrid control problem.}
  \thanks{Ivano Notarnicola and Giuseppe Notarstefano are with the Department of Engineering,
    Universit\`a del Salento, Via Monteroni, 73100
    Lecce, Italy, \texttt{name.lastname@unisalento.it.} This result is part of a
    project that has received funding from the European Research Council (ERC)
    under the European Union's Horizon 2020 research and innovation programme
    (grant agreement No 638992 - OPT4SMART). } }

\maketitle
\begin{abstract}
In this paper we consider a general, challenging distributed optimization set-up arising in several important network control applications. Agents of a network want to minimize the sum of local cost functions, each one depending on a local variable, subject to local and coupling constraints, with the latter involving all the decision variables. We propose a novel fully distributed algorithm based on a relaxation of the primal problem and an elegant exploration of duality theory. Despite its complex derivation, based on several duality steps, the distributed algorithm has a very simple and intuitive structure. That is, each node finds a primal-dual optimal solution pair of a local, relaxed version of the original problem, and then updates suitable auxiliary local variables. We prove that agents asymptotically compute their portion of an optimal (feasible) solution of the original problem. This primal recovery property is obtained without any averaging mechanism typically used in dual decomposition methods. To corroborate the theoretical results, we show how the methodology applies to an instance of a Distributed Model Predictive Control scheme in a microgrid control scenario.
\end{abstract}

\section{Introduction}
\label{sec:intro}
In the last decade distributed optimization has received significant
attention. Literature has mainly focused on \emph{cost-coupled} optimization
problems in which the cost to be minimized is the sum of local functions depending
on a common decision variable, 
see~\cite{nedic2009distributed,nedic2010constrained,%
  duchi2012dual,zhu2012distributed,jakovetic2014fast,nedic2015distributed,dilorenzo2016next,
  varagnolo2016newton,notarnicola2017asynchronous}
and references therein for an overview.
A different, more general optimization set-up amenable to distributed
computation is the minimization of the sum of local cost functions, each one
depending on a \emph{local} variable, subject to a local constraint for each
variable and a coupling constraint involving all the decision variables.
In this problem, the global optimal solution is obtained 
by stacking all the local variables. This feature leads easily to so-called 
big-data problems having a very highly dimensional decision variable that
grows with the network size. 
However, since agents are typically interested in computing only their (small) 
portion of the optimal solution, novel tailored methods 
need to be developed to address these challenges.
We call this framework \emph{constraint-coupled} optimization set-up.

Several scenarios of interest in Controls and Robotics as well as Communication and Signal
Processing strongly motivate the investigation of such a problem.
Example set-ups include resource allocation (e.g., in Communication
or Cooperative Robotics) or network flow optimization (e.g., in smart grid
energy management).
A set-up which is particularly relevant in our community is distributed Model
Predictive Control (MPC) in which the goal is to design a feedback law for a
(spatially distributed) network of dynamical systems based on the MPC
concept. In such a scheme several optimization problems need to be iteratively
solved. The local decision variable of each agent corresponds to its
state-input trajectory, while the local constraints encode its dynamics, which
is usually independent of other agents. A constraint that couples agents'
states, inputs or outputs needs to be taken into account in order to enforce
cooperative tasks as, e.g., formation control, or to take into account common
bounds, e.g., due to shared resources. 
Distributed MPC approaches are mainly
classified into non-cooperative and cooperative schemes, see, e.g.,
\cite{christofides2013distributed}. While in non-cooperative
schemes the main focus is to guarantee recursive feasibility and stability,
in cooperative approaches agents care also about optimality 
when solving the global, constraint-coupled
problems arising in each time window
and, thus, call for tailored distributed optimization algorithms.

Parallel methods for constraint-coupled problems have been developed mainly in
the context of cooperative MPC.
They are based on a master-subproblem architecture that 
traces back to late 90s \cite{bertsekas1989parallel}.
Duality is a widely used tool to decompose the problem and design
optimization algorithms as shown, e.g., in the
tutorial papers~\cite{palomar2006tutorial,necoara2011parallel}.
In~\cite{giselsson2013accelerated} an accelerated dual
decomposition is proposed to solve a MPC problem.
In~\cite{dinh2013dual} dual decomposition is combined with a penalty 
approach to solve separable nonconvex problems.
A linear convergence rate for a dual gradient algorithm for
linearly constrained separable convex problems is proven  
in~\cite{necoara2015linear}.
In~\cite{tran2016fast} an inexact dual decomposition scheme combined with
smoothing techniques is proposed.
In~\cite{hours2016parametric} a primal-dual, real-time strategy is proposed 
to solve parametric nonconvex programs usually arising in nonlinear MPC.
Recently, parallel augmented Lagrangian methods have been proposed
in~\cite{houska2016augmented,chatzipanagiotis2017convergence} to 
solve nonconvex problem instances with linear coupling constraints.
Although sometimes termed as ``distributed'' algorithms,
they require a centralized unit performing some steps in the 
proposed strategies.
When further sparsity is assumed in the problem, e.g., 
the sub-systems have coupled dynamics with their neighbors only
\cite{necoara2011parallel,dinh2013dual,necoara2015linear,giselsson2013accelerated},
then the parallel scheme can be implemented over a network
without a central authority. In this paper we propose a purely distributed 
algorithm also for general coupling constraints that involve the entire set of agents.

Distributed optimization algorithms for special versions of the
constraint-coupled set-up, arising in the context of resource allocation
problems, have been proposed in
\cite{lakshmanan2008decentralized,simonetto2012regularized,necoara2013random,
  cherukuri2015distributed}.
The general constraint-coupled set-up we consider in this paper has not received
extensive investigation in a purely distributed framework and only few works are
available, 
i.e.,~\cite{chang2014distributed,simonetto2016primal,falsone2017dual,
mateos2017distributed,notarnicola2016dsm}.
In~\cite{chang2014distributed} a consensus-based primal-dual
perturbation algorithm is proposed to solve smooth constraint-coupled optimization
problems.
Very recently, in~\cite{simonetto2016primal} and \cite{falsone2017dual} distributed algorithms
are proposed based on a consensus-based dual decomposition and a dual proximal
optimization approach, respectively. 
A class of min-max optimization problems, strictly related to the constraint-coupled
set-up, is addressed in~\cite{mateos2017distributed} using a Laplacian-based 
saddle-point subgradient scheme.
A well-known drawback in methods based on dual decomposition, is that primal feasibility is not 
easily retrieved from dual solutions. 
Thus, primal recovery mechanisms are devised in the papers above in order to recover 
a primal solution by suitably applying running 
average schemes borrowed from the centralized literature, see, e.g., \cite{nedic2009approximate}.
A special coupling associated to peak minimization problems arising in 
demand-side management is considered in~\cite{notarnicola2016dsm},
where a (simplified) algorithmic approach, similar to the one proposed in this 
paper, is proposed for that set-up.
An alternative approach to constraint-coupled problems has been proposed
in~\cite{burger2014polyhedral}, and customized to MPC
in~\cite{burger2013non,lorenzen2013distributed}, where agents employ a
cutting-plane consensus scheme to iteratively approximate their local problems.
This approach enforces agents to eventually agree on the complete solution
vector and this may be an undesirable feature in some applications.

The main paper contributions are as follows. We propose a novel, distributed 
optimization algorithm to solve constraint-coupled optimization problems
over networks.
Overall, our distributed algorithm enjoys three appealing features: (i)
local computations at each node involve only the local decision variable and,
thus, scale nicely with respect to the dimension of the decision vector, (ii)
privacy is preserved since agents do not communicate, and thus disclose, their estimates of 
local decision variable, cost or constraints, and
(iii) an estimate of a primal optimal solution component is directly computed by 
each agent without any averaging mechanism, which results in a faster algorithm.

The proposed approach combines a proper relaxation of the 
original problem with an elegant exploration of duality theory.
The resulting distributed algorithm is a two-step procedure in which each agent
iteratively performs a (primal) constrained and small-sized optimization,
followed by a dual update.
The local problems involve the local cost function and the local constraint of
the agent. Also, a local inequality constraint, which is adjusted at each step,
accounts for the coupling constraint involving all the agents.
Although this constraint dynamically changes over the iterations, we do not need
to assume a priori feasibility of local problems, but rather, thanks to the
proposed relaxation approach, local violations are allowed and proven to be
asymptotically vanishing.
Each local solution estimate is guaranteed to asymptotically converge to 
the component of an optimal (and, thus, feasible) solution of the original problem.
Such primal convergence of local estimates, known in the literature 
as primal recovery, is non-obvious in duality-based methods applied to 
(merely) convex programs.
In our distributed algorithm, this property results from the methodology we employed,
without resorting to any (commonly used) running averaging mechanism.
Moreover, this key feature has an even stronger impact on our scheme since 
it allows primal quantities to directly inherit the convergence
rate of a ``centralized'' subgradient iteration, while, in general, running averages further 
degrade such a rate.
Finally, since no particular initialization is required, our distributed algorithm 
can be implemented in a dynamic context in which the problem may change or 
nodes can appear or disappear.

The paper unfolds as follows. In Section~\ref{sec:setup} we formalize
the set-up and introduce our distributed optimization algorithm. 
In Section~\ref{sec:relaxed duality tour} we give a constructive derivation of 
of the algorithm and in~Section~\ref{sec:analysis} we conclude the analysis
by proving the its convergence properties. 
In Section~\ref{sec:simulations} we corroborate the theoretical results by
showing how the methodology applies to an 
instance of a distributed MPC controller for a microgrid.

\section{Distributed Set-up and Optimization Algorithm}
\label{sec:setup}

In this section we formally state the general constraint-coupled problem
we aim at investigating in this paper as strongly motivated by control applications 
discussed in the introduction.
Moreover, we introduce the proposed distributed algorithm along with its convergence theorem.

\subsection{Constraint-Coupled Set-up}

Consider the following optimization problem
\begin{align}
\begin{split}
  \min_{\sx{1},\ldots,\sx{N}} \: & \: \sum_{i=1}^N f_i ( \sx{i} )
  \\
  \subj \: & \: \sx{i} \in X_i, \hspace{1.2cm} i \in\until{N}
  \\
  & \: \sum_{i=1}^N \bg_i (\sx{i}) \preceq \0,
\end{split}
\label{eq:primal_original}
\end{align}
where for all $i\in\until{N}$, the set $X_i\subseteq \real^{n_i}$ with $n_i\in\natural$,
the functions $\map{f_i}{\real^{n_i}}{\real}$ and
$\map{\bg_i}{\real^{n_i}}{\real^S}$ with $S\in\natural$.
The symbol $\preceq$ (and, consistently, for other sides)
means that the inequality holds
component-wise and $\0  \triangleq [0,\ldots,0]^\top \! \! \in \! \real^S\!$.

\begin{assumption}
  For all $i\in\until{N}$, each function $f_i$ is convex, 
  and each $X_i$ is a non-empty, compact, convex set. 
  Moreover, each $\bg_i$ is a component-wise convex function, i.e., 
  for all $s\in\until{S}$ each component $\map{ \bg_{is} }{\real^{n_i}}{\real}$ 
  of $\bg_i$ is a convex function.
  \oprocend
\label{ass:primal_regualirity}
\end{assumption}

The following assumption is the well-known Slater's constraint qualification.
\begin{assumption}
  There exist 
  $\barsx{1}\in\relint(X_1), \ldots, \barsx{N}\in\relint(X_N)$ such that
  $\sum_{i=1}^N \bg_i (\barsx{i}) \prec \0$.~\oprocend
\label{ass:constraints_qualification}
\end{assumption}
These assumptions are quite standard and guarantee 
that problem~\eqref{eq:primal_original} admits (at least) an 
optimal solution $( \bx_1^\star,\ldots, \bx_N^\star)$ such 
that its optimal cost is $\sum_{i=1}^N f_i(\bx_i^\star) = f^\star$.
Moreover, a dual approach can be applied since strong duality 
holds. 
Notice that we assumed that 
$\sum_{i=1}^N \bg_i(\sx{i}) \preceq \0$ admits a strictly feasible point, 
while each $\bg_i(\sx{i}) \preceq \0$ may not.

We consider a network of $N$ processors communicating according to a
\emph{connected} and \emph{undirected} graph $\GG = (\until{N}, \EE)$, where
$\EE\subseteq \until{N} \times \until{N}$ is the set of edges. Edge $(i,j)$
models the fact that node $i$ sends information to $j$. Note that, %
being the graph undirected, for each $(i,j)\in\EE$, then also
$(j,i)\in\EE$. We denote by $|\EE|$ the cardinality of $\EE$ and by $\nbrs_i$
the set of \emph{neighbors} of node $i$ in $\GG$, i.e.,
$\nbrs_i = \left\{j \in \until{N} \mid (i,j) \in \EE \right\}$.
Each node $i$ knows only $f_i$, $\bg_i$ and $X_i$, and aims at estimating its
portion $\sx{i}^\star$ of an optimal solution $( \bx_1^\star,\ldots, \bx_N^\star)$
of~\eqref{eq:primal_original} by means of local communication only. 

\begin{remark}%
The popular set-up $\min_{\sx{} \in \cap_i X_i} \sum_i f_i ( \sx{} )$
can be cast as~\eqref{eq:primal_original}.
However, problem~\eqref{eq:primal_original} is well suited 
to more general frameworks. %
Indeed, in~\eqref{eq:primal_original} 
local variables $\sx{i}$ represent the relevant information agent $i$ is 
interested in, including the special case of being a mere copy of the common $\sx{}$.%
\oprocend
\end{remark}

\subsection{\algname/}
\label{subsec:algorithm}
In this subsection we present our \algname/ method (\algacronym/)
which is a novel strategy to solve problem~\eqref{eq:primal_original}
over networks.

Informally, the algorithm consists of an iterative two-step procedure.  First,
each node $i\in\until{N}$ stores a set of variables $((\sx{i}$, $\rho_{i}), 
\smu{i}) \in \real^{n_i} \times \real \times \real^{S}$
obtained as a primal-dual optimal solution pair of problem
\eqref{eq:alg_minimization}. The vector $\smu{i}$ is the multiplier associated to 
the inequality constraint in~\eqref{eq:alg_minimization}. 
We notice that \eqref{eq:alg_minimization} mimics a local version of the centralized 
problem~\eqref{eq:primal_original}, where the coupling with the other nodes in
the original formulation is replaced by a local term depending only on
neighboring variables $\slambda{ij} \in \real^S$ and $\slambda{ji} \in \real^S$,
$j\in\nbrs_i$. Moreover, this local version of the coupling constraint is also
relaxed, i.e., a positive violation $\rho_{i}\1$ is allowed, where $\1 \triangleq [1,\ldots,1]^\top \in \real^S$. 
Finally, instead of minimizing only the local $f_i$, the (scaled) violation 
$M\rho_i$, $M>0$, enters the cost function as well.
The auxiliary variables
$\slambda{ij}$, $j\in\nbrs_i$, are updated in the second step according to a
suitable linear law that combines neighboring $\smu{i}$ as
in~\eqref{eq:alg_update}.
Nodes use a step-size denoted by $\gamma^t$ and can initialize the
variables $\slambda{ij}$, $j\in\nbrs_i$ to \emph{arbitrary} values.
In the following table we formally state our distributed algorithm from the
perspective of node $i$.
\begin{algorithm}
\renewcommand{\thealgorithm}{}
\floatname{algorithm}{Distributed Algorithm}

  \begin{algorithmic}[0]
    \Statex \textbf{Processor states}: $\sx{i}$, $\rho_{i}$, $\smu{i}$ and $\slambda{ij}$ for $j\in\nbrs_i$

    \Statex \textbf{Evolution}:

      \StatexIndent[0.5] \textbf{Gather} $ \slambda{ji}^t$ from $j\in\nbrs_i$

      \StatexIndent[0.5] \textbf{Compute} $\big( (\sx{i}^{t+1}, \rho_{i}^{t+1}),\smu{i}^{t+1} \big)$ 
        as a primal-dual optimal solution pair of
      \begin{align}
      \begin{split}
        \min_{\sx{i},\rho_{i} } \: & \: f_i ( \sx{i} ) + M  \rho_{i}
        \\
        \subj \: & \: \rho_i \ge 0,\: \: \sx{i} \in X_i
        \\
        & \:  \bg_i (\sx{i}) + \sum_{j\in\nbrs_i} \big( \slambda{ij}^t - \slambda{ji}^t \big) \preceq \rho_{i}\1
      \end{split}
      \label{eq:alg_minimization}
      \end{align}

      \StatexIndent[0.5] \textbf{Gather} $ \bmu_j^{t+1}$ from $j\in\nbrs_i$

      \StatexIndent[0.5] \textbf{Update} for all $j\in\nbrs_i$
      \begin{align}
        \slambda{ij}^{t+1} & = \slambda{ij}^t - \gamma^t \big( \smu{i}^{t+1} - \smu{j}^{t+1} \big)
      \label{eq:alg_update}
      \end{align}

  \end{algorithmic}
  \caption{\algacronym/}
  \label{alg:algorithm}
\end{algorithm}

As already mentioned, each agent $i$ aims at
computing an optimal strategy by means of local interaction only.
The proposed distributed algorithm is a protocol in which agents
exchange only the vectors $\smu{i}^t$ and $\slambda{ij}^t$ without
explicitly communicating the current estimates of their local decision variables
$\sx{i}^t$, costs $f_i$ or constraints $\bg_i$.
This is an important, appealing feature of the \algacronym/ distributed algorithm
since it guarantees privacy in the network.  

\begin{remark}[Arbitrary Initialization]
Since the initialization is arbitrary, then the algorithm can continuously
run into a ``dynamic context'' in which agents can join/leave the network and problem data
can change. These events simply induce a new optimization problem and trigger a new transient
for the distributed algorithm that will eventually converge 
to a solution of the new problem instance.~\oprocend
\end{remark}

In order to gain more intuition about the algorithmic evolution, at this point
we provide an informal interpretation, supported by
Figure~\ref{fig:local_coupling_violation}, of the local optimization step
in~\eqref{eq:alg_minimization}.
\begin{figure}[htpb]
\centering
\includegraphics[scale=0.9]{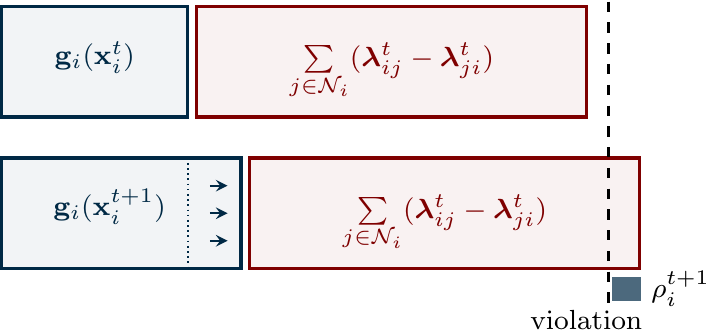}
\caption{
  Graphical representation of the local constraint relaxation for a scalar coupling 
  constraint.
}
\label{fig:local_coupling_violation}
\end{figure}

Agent $i$, due to its partial knowledge, can only optimize with respect to its
own decision variable $\sx{i}$. Thus, it can solve an instance of
problem~\eqref{eq:primal_original} in which all the other variables in the
network have a given value $\bx_{j}^t$ for $j\in\until{N}\setminus \{i\}$. Thus,
the cost function reduces to $f_i$ only.  As for the coupling constraint, it
describes how the resources are allocated to all the agents at each iteration
$t$.  In the figure, we show a possible instance of a feasible allocation: in
blue we depicted the resource assigned to agent $i$ while in red an estimate
of the resources currently allocated to all the other agents. When agent $i$ optimizes 
its local variable $\sx{i}$ only, it can ``play'' with the ``available resource slot'' given by
$-\sum_{j\neq i} \bg_j( \bx_{j}^t )$.
Since the current allocation is in general not optimal, this constraint might be
too restrictive. In fact, it can slow down (and even stall) the optimization
process by easily leading to infeasibility of the local
problem~\eqref{eq:alg_minimization} when $\rho_i$ is set to $0$. On this regard,
recall that we do not assume feasibility of every $\bg_{i}$ independently.
Also, it is worth noting that such ``available resource slot'' depends on the 
entire network's variables, and, thus, it is not an easily
available information in a distributed scenario.
Thus, we propose a strategy in which at each iteration $t$, each agent
$i$ replaces the term $\sum_{j\neq i} \bg_j( \bx_{j}^t )$ in the coupling 
with $\sum_{j\in\nbrs_i} \big( \slambda{ij}^t - \slambda{ji}^t \big)$.
Notice that this term can be computed \emph{locally} at each node by communicating 
with neighboring agents only. 
This term is then iteratively adjusted along the algorithmic evolution 
in order to eventually obtain an optimal solution.
Finally, each agent $i$ is allowed for a violation $\rho_{i}\1$ of the local version of the
coupling constraint. At the same time, this violation is penalized in order to encourage it 
to eventually converge to zero. 
This intuitive description will be rigorously derived and proven in the
following sections.

We are now ready to state the main result of the paper, namely the convergence
of the \algacronym/ distributed algorithm.
We start by formalizing the assumption that the step-size should satisfy.
\begin{assumption}
  The sequence $\{ \gamma^t \}_{t\ge0}$, with each $\gamma^t \ge 0$, satisfies the
  conditions $\sum_{t=0}^{\infty} \gamma^t = \infty$,
  $\sum_{t=0}^{\infty} \big( \gamma^t \big)^2 < \infty$.
  \label{ass:step-size}
  \oprocend
\end{assumption}

The convergence theorem is stated below.
\begin{theorem}
  Let Assumption~\ref{ass:primal_regualirity} and
  \ref{ass:constraints_qualification} hold, and let the step-size $\gamma^t$
  satisfy Assumption~\ref{ass:step-size}.
  Moreover, letting $\smu{}^\star$ be an optimal solution of the dual
  of problem~\eqref{eq:primal_original}, assume
  $M > \| \smu{}^\star\|_1$.
  Consider a sequence $\big\{ \sx{i}^t, \rho_{i}^t \big\}_{t\ge 0}$, $i\in \until{N}$, generated by the 
  \algacronym/ distributed algorithm with an arbitrary initialization.
  Then, the following holds:
  \begin{enumerate}
    \item $\big \{ \sum_{i=1}^N \big( f_i ( \sx{i}^t ) + M \rho_{i}^t \big) \big\}_{t\ge 0}$ 
    converges to the optimal cost $f^\star$ of~\eqref{eq:primal_original};

    \item every limit point of $\big \{ \sx{i}^t \big\}_{t\ge 0}$,
    $ i \in \until{N}$, 
    is a primal optimal (feasible) solution of~\eqref{eq:primal_original}.%
  \end{enumerate}
\label{thm:convergence}  
\end{theorem}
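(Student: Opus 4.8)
The plan is to interpret \algacronym/ as a subgradient method minimizing a suitable convex value function, and then to recover primal feasibility through an \emph{exact-penalty} argument. First I would reformulate problem~\eqref{eq:primal_original}: exploiting connectivity of $\GG$, the coupling $\sum_i \bg_i(\sx{i}) \preceq \0$ is equivalent to the existence of edge variables $\slambda{ij}\in\real^S$ such that the local constraints $\bg_i(\sx{i}) + \sum_{j\in\nbrs_i}(\slambda{ij}-\slambda{ji})\preceq\0$ hold for every $i$. The nontrivial direction uses a standard flow/circulation argument on the connected graph, while the converse follows by summation, since $\sum_i\sum_{j\in\nbrs_i}(\slambda{ij}-\slambda{ji})=\0$. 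Relaxing each local constraint by a scalar violation $\rho_i\ge0$ penalized by $M\rho_i$ yields exactly the decoupled problems~\eqref{eq:alg_minimization}, coupled only through the shared edge variables. I would then show, via the value function $p(\mathbf{z})=\min\{\sum_i f_i(\sx{i}): \sx{i}\in X_i,\ \sum_i\bg_i(\sx{i})\preceq\mathbf{z}\}$ and the subgradient relation $-\smu{}^\star\in\partial p(\0)$ granted by Slater (Assumption~\ref{ass:constraints_qualification}), that for $M>\|\smu{}^\star\|_1$ this relaxation is exact, i.e. its optimal value is $f^\star$ and any minimizer has zero total violation.

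Next I would define $\Phi(\blambda)$ as the optimal value of the minimization over $(\sx{i},\rho_i)$ for fixed edge variables $\blambda=\{\slambda{ij}\}$. By partial minimization of a jointly convex problem, $\Phi$ is finite (each $X_i$ is compact) and convex, with $\min_{\blambda}\Phi(\blambda)=f^\star$ by the previous step, the minimum being attained thanks to the flow argument. The key structural fact is that the update~\eqref{eq:alg_update} is a subgradient step for $\Phi$: since $\slambda{ij}$ enters node $i$'s constraint with sign $+$ and node $j$'s with sign $-$, standard sensitivity of value functions gives $\smu{i}^{t+1}-\smu{j}^{t+1}\in\partial_{\slambda{ij}}\Phi(\blambda^t)$, with $\smu{i}^{t+1}$ the multiplier returned by~\eqref{eq:alg_minimization}. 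Crucially, the stationarity condition of~\eqref{eq:alg_minimization} in $\rho_i$ reads $M-\1^{\top}\smu{i}-\nu_i=0$ with $\nu_i\ge0$ the multiplier of $\rho_i\ge0$, whence $\|\smu{i}^{t+1}\|_1=\1^{\top}\smu{i}^{t+1}\le M$ for all $i,t$; this furnishes a uniform bound $\|\smu{i}^{t+1}-\smu{j}^{t+1}\|_1\le 2M$ on the subgradients, a property I expect to be essential.

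With a convex objective $\Phi$, a nonempty set of minimizers, uniformly bounded subgradients, and a step-size obeying Assumption~\ref{ass:step-size}, I would invoke the classical convergence theorem for the subgradient method to conclude that $\blambda^t$ converges to a minimizer and that $\Phi(\blambda^t)\to f^\star$. Since, by construction, $\Phi(\blambda^t)=\sum_{i=1}^N\big(f_i(\sx{i}^{t+1})+M\rho_i^{t+1}\big)$, this is precisely statement~(i), up to the harmless index shift.

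Finally, for the primal-recovery statement~(ii) I would avoid any averaging and argue directly. Summing the relaxed constraint of~\eqref{eq:alg_minimization} over $i$ and using $\sum_i\sum_{j\in\nbrs_i}(\slambda{ij}^t-\slambda{ji}^t)=\0$ gives, at \emph{every} iteration, $\sum_{i=1}^N\bg_i(\sx{i}^{t+1})\preceq\big(\sum_{i=1}^N\rho_i^{t+1}\big)\1$. Fix a limit point $(\barsx{1},\dots,\barsx{N})$ along a subsequence; it exists by compactness of the $X_i$, and the $\rho_i^t$ are bounded by statement~(i), so we may also pass $\rho_i\to\bar{\rho}_i$ with total $R=\sum_i\bar{\rho}_i\ge0$. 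Continuity of the finite convex $f_i$ and $\bg_i$ yields $\sum_i f_i(\barsx{i})=f^\star-MR$ and $\sum_i\bg_i(\barsx{i})\preceq R\1$, so $(\barsx{i})$ is feasible for the $R\1$-perturbed problem and hence $\sum_i f_i(\barsx{i})\ge p(R\1)\ge f^\star-R\|\smu{}^\star\|_1$. Combining these gives $MR\le R\|\smu{}^\star\|_1$, which forces $R=0$ because $M>\|\smu{}^\star\|_1$; therefore $\sum_i\bg_i(\barsx{i})\preceq\0$ and $\sum_i f_i(\barsx{i})=f^\star$, i.e. the limit point is primal optimal. I expect this last exact-penalty step---extracting vanishing total violation from the strict penalty inequality---to be the main obstacle and the conceptual heart of the proof, since it is exactly what replaces the running-average primal recovery of standard dual decomposition methods.
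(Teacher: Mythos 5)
Your proof is correct and follows the same overall architecture as the paper's: interpret the $\slambda{ij}$-update as a subgradient step on a separable convex function of the edge variables whose subgradient components are $\smu{i}-\smu{j}$, obtain uniform subgradient bounds from the cap $\smu{i}^{\top}\1\le M$, aggregate the local constraints via the telescoping identity $\sum_i\sum_{j\in\nbrs_i}(\slambda{ij}-\slambda{ji})=\0$, extract limit points by compactness, and kill the total violation using $M>\|\smu{}^\star\|_1$. Where you genuinely diverge is in how the two key facts are justified. The paper builds its function $\eta(\sLambda)$ by a second dualization (dual of the restricted dual with per-node copies and consistency constraints), proves $\min\eta=f^\star$ with attainment via strong duality for that polyhedrally-constrained dual pair (Lemma~\ref{lem:dual_dual}), and needs a minimax interchange (Lemma~\ref{lem:dual_minmax_equivalence}) to identify $\eta_i$ with the optimal value of the local problem~\eqref{eq:alg_minimization}. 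You instead define the same object directly as a primal value function $\Phi(\blambda)$ by partial minimization, get convexity and finiteness for free from joint convexity and compactness of the $X_i$, obtain the subgradient formula from standard sensitivity of the local multipliers (which still needs local Slater via large $\rho_i$, as in the paper's Lemma~\ref{lem:dual_minmax_equivalence}), and establish attainment of $\min_{\blambda}\Phi=f^\star$ by an explicit flow/circulation construction on the connected graph rather than by duality. Likewise, for $\bar\rho=0$ the paper routes through Proposition~\ref{prop:relaxed_solution_structure} and the cited exact-penalty result, whereas you give a self-contained two-line argument from $-\smu{}^\star\in\partial p(\0)$ and $MR\le R\|\smu{}^\star\|_1$. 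Your route buys a more elementary, primal-side derivation that avoids the minimax swap and the external exact-penalty citation; the paper's double-dualization buys the explicit identification of~\eqref{eq:alg_minimization} as the dual-decomposition subproblem of the restricted dual, which is what motivates the algorithm's design in the first place.
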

\begin{proof}
The proof is given in Section~\ref{subsec:convergence_proof}.
\end{proof}

\begin{remark}
  When $f_i = 0$ for all $i\in\until{N}$, then our \algacronym/ becomes a
  distributed algorithm for solving a feasibility problem, i.e., find
  $(\sx{1},\ldots,\sx{N})$ such that $\sx{i} \in X_i$, for all $i \in\until{N}$
  and $\sum_{i=1}^N \bg_i (\sx{i}) \preceq \0$.  \oprocend
\end{remark}

\section{Algorithm Analysis: \\ Relaxation and Duality Tour}
\label{sec:relaxed duality tour}
In this section we present a constructive derivation of our distributed algorithm.
The methodology relies on a proper relaxation of the original problem
and on the derivation of a sequence of equivalent problems.
We point out that, although based on a relaxation, the
proposed algorithm exploits such a relaxation to solve \emph{exactly} the 
original problem formulation in a distributed way.

\subsection{First Duality Step and Relaxation Approach}
\label{subsec:first_dual}
We start our duality tour by deriving the dual problem
of~\eqref{eq:primal_original} and a restricted version necessary for the
algorithm derivation.

Let $\smu{} \succeq \0 \in\real^S$, be a multiplier associated to 
the inequality constraint $\sum_{i=1}^N \bg_{i} (\sx{i} ) \preceq \0$ in~\eqref{eq:primal_original}.
Then, the dual of problem~\eqref{eq:primal_original} is given by
\begin{align}
  \begin{split}
    \max_{\smu{} \in \real^S} \: & \: \sum_{i=1}^N q_i (\smu{})
    \\
    \subj \: & \:\: \smu{} \succeq \0
  \end{split}
\label{eq:dual_original}
\end{align}
where each term $q_i$ of the dual function $q(\smu{}) = \sum_{i=1}^N q_i (\smu{})$ is 
defined as
\begin{align}
    q_i (\smu{}) = \min_{ \sx{i} \in X_i} \Big( f_i ( \sx{i} ) + \smu{}^\top \bg_i( \sx{i} ) \Big),
    \label{eq:qi_definition}
\end{align}
for all $i\in\until{N}$. Let $q^\star$ be the optimal cost of~\eqref{eq:dual_original}.

As already mentioned, in light of Assumptions~\ref{ass:primal_regualirity} 
and~\ref{ass:constraints_qualification}, problem~\eqref{eq:primal_original} is
feasible and has finite optimal cost $f^\star$. 
Moreover, the Slater's condition holds and, thus, the strong duality theorem for convex 
inequality constraints, \cite[Proposition~5.3.1]{bertsekas1999nonlinear}, applies, ensuring 
that strong duality holds, i.e., problems~\eqref{eq:primal_original} and~\eqref{eq:dual_original} 
have the same optimal cost $f^\star = q^\star$.
Moreover, $q^\star$ is attained at some $\smu{}^\star \succeq \0$, i.e., $q( \smu{}^\star ) = q^\star$,
cf.~\cite[Proposition~5.1.4]{bertsekas1999nonlinear}.
Finally, we recall that since $\sum_{i=1}^N q_i (\smu{})$ is the dual function
of~\eqref{eq:primal_original}, then it is concave on its convex domain $\smu{} \succeq \0$. 
With the dual problem at hand, several existing algorithms can be applied to directly
solve~\eqref{eq:dual_original} in a distributed way, see e.g., the distributed
projected subgradient~\cite{nedic2009distributed}. However, as pointed 
out in the introduction such dual approaches do not guarantee primal recovery and 
additional schemes must be employed to regain it, e.g., averaging mechanisms.

In this paper we propose an alternative approach that relies on a further duality 
step that gives rise to the \algacronym/ distributed algorithm, which overtakes 
these issues.
Let us introduce an optimization problem similar to~\eqref{eq:dual_original},
given by
\begin{align}
  \begin{split}
    \max_{\smu{} \in \real^S} \: & \: \sum_{i=1}^N q_i (\smu{})
    \\
    \subj \: & \: \smu{} \succeq \0, \: \smu{}^\top \1 \le M,
  \end{split}
  \label{eq:dual_restricted}
\end{align}
where $M$ is a positive scalar and $\1=[1,\ldots,1]^\top$. %
This problem is a \emph{restricted} version of problem~\eqref{eq:dual_original}. 
Here, in fact an additional constraint, namely $\smu{}^\top \1 \le M$, has been 
added to~\eqref{eq:dual_original}. It is worth mentioning 
that this restriction makes the constraint set of~\eqref{eq:dual_restricted} 
a compact set.
Although this step may seem not motivated at this point of the paper, its necessity
will be clear from the following steps of the analysis, see also 
Section~\ref{subsec:on_the_relaxation} for a dedicated discussion.

Notice that, if $M$ is sufficiently large, the presence of the constraint
$\smu{}^\top \1 \le M$ in~\eqref{eq:dual_restricted} will not alter the optimal 
solutions of the unrestricted problem~\eqref{eq:dual_original}. 
The next result formally establishes the relationship between
problems~\eqref{eq:dual_restricted} and~\eqref{eq:dual_original}.
\begin{lemma}
\label{lem:relaxation_equivalence}
  Let $\smu{}^\star$ be an optimal solution of problem~\eqref{eq:dual_original} and
  $M$ be a positive scalar satisfying $M > \| \smu{}^\star \|_1$.
  Then, problems~\eqref{eq:dual_restricted} and~\eqref{eq:dual_original}
  have the same optimal cost, namely $q^\star = f^\star$.
  Moreover, $\smu{}^\star$ is an optimal solution also for problem~\eqref{eq:dual_restricted}.
\end{lemma}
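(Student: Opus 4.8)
The plan is to exploit that the restricted problem~\eqref{eq:dual_restricted} is obtained from the dual~\eqref{eq:dual_original} by appending the single constraint $\smu{}^\top \1 \le M$, so its feasible set is a subset of that of~\eqref{eq:dual_original}. From this inclusion I would extract a two-sided bound that forces the two optimal costs to coincide.

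First I would note that both problems maximize the same concave objective $\sum_{i=1}^N q_i(\smu{})$ and both impose $\smu{} \succeq \0$, while~\eqref{eq:dual_restricted} carries the extra constraint $\smu{}^\top \1 \le M$. Hence every feasible point of~\eqref{eq:dual_restricted} is feasible for~\eqref{eq:dual_original}, and shrinking the feasible set of a maximization cannot raise its optimum; this gives that the optimal cost of~\eqref{eq:dual_restricted} is at most $q^\star$.

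Next I would verify that the given $\smu{}^\star$ is feasible for~\eqref{eq:dual_restricted}. Since $\smu{}^\star \succeq \0$, its $\ell_1$ norm satisfies $\| \smu{}^\star \|_1 = (\smu{}^\star)^\top \1$, and the hypothesis $M > \| \smu{}^\star \|_1$ yields $(\smu{}^\star)^\top \1 < M$; together with $\smu{}^\star \succeq \0$ this shows $\smu{}^\star$ meets both constraints of~\eqref{eq:dual_restricted}. Because $\smu{}^\star$ is optimal for~\eqref{eq:dual_original} it attains the value $q(\smu{}^\star) = q^\star$, so the optimal cost of~\eqref{eq:dual_restricted} is at least $q^\star$. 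Combining the two inequalities gives equality of the optimal costs, and since $\smu{}^\star$ is feasible for~\eqref{eq:dual_restricted} and attains that common value, it is optimal for~\eqref{eq:dual_restricted} as well. The identity $q^\star = f^\star$ is not proven anew but simply inherited from the strong duality already established above.

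The argument involves no genuinely hard step; the only point that deserves explicit care is the identification $\smu{}^\top \1 = \| \smu{} \|_1$, which is valid precisely because of the sign constraint $\smu{} \succeq \0$ and would fail without it. I would therefore state that equivalence explicitly before using the hypothesis $M > \| \smu{}^\star \|_1$, so that the role of $M$ and the compactness remark preceding the lemma become fully transparent.
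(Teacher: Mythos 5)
Your proof is correct and follows essentially the same route as the paper's: the restriction can only lower the optimum of the maximization, while $\smu{}^\star$ remains feasible for the restricted problem because $(\smu{}^\star)^\top \1 = \|\smu{}^\star\|_1 < M$, forcing the two optimal costs to coincide. If anything, your write-up is slightly more careful than the paper's, which states the inequality from the set inclusion in the wrong direction ("greater than or equal to" where "less than or equal to" is meant for a maximization) and leaves the identification $\smu{}^\top\1 = \|\smu{}\|_1$ implicit.
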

\begin{proof}
  The constraint set $\{ \smu{} \succeq \0 \mid \smu{}^\top \1 \le M\}$ is a restriction
  of the constraint set $\smu{} \succeq \0$ of problem~\eqref{eq:dual_original} containing $\smu{}^\star$. 
  Thus the optimal cost of~\eqref{eq:dual_restricted} is, in general, greater than or equal to the
  optimal cost of~\eqref{eq:dual_original}. Since the domain of~\eqref{eq:dual_restricted} contains at 
  least one optimal solution of problem~\eqref{eq:dual_original}, namely $\smu{}^\star$, then the 
  optimal cost of problem~\eqref{eq:dual_restricted} is $q^\star$ and is (at
  least) attained at $\smu{}^\star$, so that the proof follows.%
\end{proof}

With the dual problem~\eqref{eq:dual_original} and its restricted
version~\eqref{eq:dual_restricted} at hand, one may wonder about the connection
between their primal counterparts.
Next, we show that the restricted problem~\eqref{eq:dual_restricted} is the dual 
of a \emph{relaxed} version of the original primal problem~\eqref{eq:primal_original}. 
\begin{lemma}
\label{lem:primal_relaxation}
  Problem~\eqref{eq:dual_restricted} is the dual of the following optimization 
  problem
  \begin{align}
  \begin{split}
    \min_{ \sx{1},\ldots,\sx{N}, \rho } \: & \: \sum_{i=1}^N f_i ( \sx{i} ) + M \rho
    \\
    \subj \: & \: \rho \ge 0, \:\: \sx{i} \in X_i, \:\:\: i \in\until{N}
    \\
    & \: \sum_{i=1}^N \bg_i (\sx{i}) \preceq \rho \1,
  \end{split}
  \label{eq:primal_relaxed}
  \end{align}
  and strong duality holds.%
\end{lemma}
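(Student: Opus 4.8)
The plan is to form the partial Lagrangian of the relaxed primal problem~\eqref{eq:primal_relaxed}, dualizing only the coupling inequality $\sum_{i=1}^N \bg_i(\sx{i}) \preceq \rho\1$ while keeping $\sx{i}\in X_i$ and $\rho\ge 0$ as explicit domain constraints. Introducing a multiplier $\smu{}\succeq\0$ for the coupling constraint, the Lagrangian conveniently separates as
\begin{align*}
  L = \sum_{i=1}^N \big( f_i(\sx{i}) + \smu{}^\top \bg_i(\sx{i}) \big) + \rho\,(M - \smu{}^\top\1),
\end{align*}
so that the dual function is obtained by minimizing over $\sx{i}\in X_i$, $i\in\until{N}$, and $\rho\ge 0$ independently.

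The first step is to observe that the minimization over the $\sx{i}$ decouples across agents and, by the definition~\eqref{eq:qi_definition}, yields exactly $\sum_{i=1}^N q_i(\smu{})$. The key step — the one that produces the restriction appearing in~\eqref{eq:dual_restricted} — is the minimization of the scalar term $\rho\,(M-\smu{}^\top\1)$ over $\rho\ge 0$: this infimum equals $0$ (attained at $\rho=0$) when $\smu{}^\top\1 \le M$ and equals $-\infty$ otherwise. Hence the dual function coincides with $\sum_{i=1}^N q_i(\smu{})$ on the set $\{\smu{}\succeq\0 \mid \smu{}^\top\1\le M\}$ and is $-\infty$ elsewhere, so that maximizing it over $\smu{}\succeq\0$ is precisely problem~\eqref{eq:dual_restricted}.

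It then remains to establish strong duality, for which I would verify that~\eqref{eq:primal_relaxed} meets the hypotheses of the strong duality theorem for convex inequality constraints, \cite[Proposition~5.3.1]{bertsekas1999nonlinear}: the cost $\sum_{i=1}^N f_i(\sx{i}) + M\rho$ is convex, the coupling constraint function $\sum_{i=1}^N \bg_i(\sx{i}) - \rho\1$ is component-wise convex (each $\bg_i$ is component-wise convex by Assumption~\ref{ass:primal_regualirity} and $-\rho\1$ is linear), and the abstract constraint set $\{(\sx{1},\ldots,\sx{N},\rho) \mid \sx{i}\in X_i,\ \rho\ge 0\}$ is convex. For Slater's condition I would reuse the strictly feasible point of Assumption~\ref{ass:constraints_qualification}: taking $\barsx{i}\in\relint(X_i)$ with $\sum_{i=1}^N \bg_i(\barsx{i})\prec\0$ and any $\rho>0$ gives a relative-interior point for which $\sum_{i=1}^N \bg_i(\barsx{i}) \prec \0 \prec \rho\1$, so the relaxed coupling constraint is satisfied strictly. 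Strong duality and attainment of the dual optimum then follow.

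I do not expect a serious obstacle here, as the derivation is a standard Lagrangian computation. The only point requiring care is recognizing that minimizing the term linear in $\rho$ over $\rho\ge 0$ is exactly what converts the penalty weight $M$ into the dual-domain restriction $\smu{}^\top\1\le M$; this is what makes~\eqref{eq:dual_restricted} — rather than the unrestricted dual~\eqref{eq:dual_original} — emerge as the dual of the relaxed problem, and it clarifies a posteriori why the restriction was introduced.
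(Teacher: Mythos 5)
Your proposal is correct and follows essentially the same route as the paper: dualize only the coupling constraint $\sum_{i=1}^N \bg_i(\sx{i}) \preceq \rho\1$, note that the minimization over $\rho \ge 0$ of the term $\rho\,(M - \smu{}^\top\1)$ yields the restriction $\smu{}^\top\1 \le M$ on the dual domain, and identify the remaining minimization over $\sx{i}\in X_i$ with $\sum_{i=1}^N q_i(\smu{})$. Your explicit verification of Slater's condition for the relaxed problem (via $\barsx{i}\in\relint(X_i)$ and $\rho>0$) is a welcome addition that the paper's own proof leaves implicit.
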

\begin{proof}
  The dual function of~\eqref{eq:primal_relaxed} is given by
  \begin{align*}
    q_R(\smu{}) \! &=\!\!
      \inf_{ \substack{ \sx{1}\in X_1,\ldots,\sx{N}\in X_N\\ \rho \ge 0} }   
      \!  \sum_{i=1}^N \!\!\Big( \! f_i (\sx{i}) \!+\! \smu{}^{\!\top \! \!} \bg_i (\sx{i}) \!\Big) 
    \!\!+\!\!
    \rho \big( M \!-\! \smu{}^{\!\top } \1 \big)
    \\[0.5ex]
    & =
    \begin{cases}
      \displaystyle \sum_{i=1}^N \underbrace{ \min_{ \sx{i}\in X_i } \!\! \Big( f_i (\sx{i}) + \smu{}^{\!\top } \bg_i
      (\sx{i}) \Big) }_{ q_i(\smu{}) }, \hspace{-0.2cm} & \text{if} \, M \!-\! \smu{}^\top \1 \ge 0
      \\[-0.5ex]
      -\infty, & \text{otherwise}
    \end{cases}
  \end{align*}
  where each $q_i(\smu{}) $ is the same defined in~\eqref{eq:qi_definition}.
  The maximization of the dual function $q_R(\smu{})$ on its domain 
  turns out to be the maximization of $\sum_{i=1}^N q_i (\smu{})$ over
  $\{ \smu{} \succeq \0 \mid \smu{}^\top \1 \le M\}$, which is
  problem~\eqref{eq:dual_restricted}, and the proof follows.%
\end{proof}

Notice that problem~\eqref{eq:primal_relaxed}
is a relaxation of problem~\eqref{eq:primal_original} since we allow for 
a positive violation of the coupling constraint. At the same time, the %
violation $\rho$ is penalized with a scaling factor $M$ in order to discourage it.
The variable $\rho$ resembles the $\rho_{i}$ introduced in the distributed
\algacronym/ algorithm. However, as it will be clear from the forthcoming analysis,
$\rho_{i}$ is \emph{not} a local estimate (or copy) of $\rho$, but it rather represents
the local contribution of agent $i$ to the common violation $\rho$.

The following result characterizes how the original primal problem~\eqref{eq:primal_original} and 
its relaxed version~\eqref{eq:primal_relaxed} are related.
\begin{proposition}
\label{prop:relaxed_solution_structure}
  Let $M$ be such that $M > \| \smu{}^\star\|_1$ with
  $\smu{}^\star$ an optimal solution of the dual of problem~\eqref{eq:primal_original}.
  The optimal solutions of the relaxed problem~\eqref{eq:primal_relaxed} 
  are in the form $(\sx{1}^\star,\ldots,\sx{N}^\star,0)$, where $(\sx{1}^\star,\ldots,\sx{N}^\star)$
  is an optimal solution of~\eqref{eq:primal_original}, i.e., solutions of~\eqref{eq:primal_relaxed}
  must have $\rho^\star = 0$.
\end{proposition}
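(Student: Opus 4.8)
The plan is to combine the strong duality established in Lemma~\ref{lem:primal_relaxation} with an exact--penalty argument driven by the strict inequality $M > \| \smu{}^\star \|_1$. The crucial elementary observation is that, since $\smu{}^\star \succeq \0$, one has $\| \smu{}^\star \|_1 = (\smu{}^\star)^\top \1$, so that the hypothesis on $M$ reads $M - (\smu{}^\star)^\top \1 > 0$; this is precisely the coefficient multiplying the relaxation variable $\rho$ in the Lagrangian of~\eqref{eq:primal_relaxed}, and its strict positivity is what will pin $\rho$ to zero.

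I would first collect the facts already available: by Lemma~\ref{lem:primal_relaxation} strong duality holds between~\eqref{eq:primal_relaxed} and~\eqref{eq:dual_restricted}, and by Lemma~\ref{lem:relaxation_equivalence} the point $\smu{}^\star$ is an optimal solution of~\eqref{eq:dual_restricted}, with common optimal value $f^\star$. Associating $\smu{}$ to the constraint $\sum_{i=1}^N \bg_i(\sx{i}) - \rho \1 \preceq \0$, the Lagrangian of~\eqref{eq:primal_relaxed} reads
\begin{align*}
  L(\bx,\rho,\smu{}) = \sum_{i=1}^N \big( f_i(\sx{i}) + \smu{}^\top \bg_i(\sx{i}) \big) + \rho \, \big( M - \smu{}^\top \1 \big),
\end{align*}
with $\bx = (\sx{1},\ldots,\sx{N})$, and it is separable in the blocks $\bx$ and $\rho$.

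The core step, which I expect to be the main (albeit standard) obstacle, is to show that every primal optimal solution $(\bx^\star,\rho^\star) = (\sx{1}^\star,\ldots,\sx{N}^\star,\rho^\star)$ of~\eqref{eq:primal_relaxed} attains the infimum of $L(\,\cdot\,,\smu{}^\star)$ over $\{\sx{i} \in X_i,\ \rho \ge 0\}$. I would obtain this from the saddle-point relations implied by strong duality: recalling that $\inf_{\sx{i}\in X_i,\,\rho\ge0} L(\bx,\rho,\smu{}^\star) = \sum_{i=1}^N q_i(\smu{}^\star)$ equals the optimal value $f^\star$ of~\eqref{eq:dual_restricted}, and using primal feasibility $\sum_{i=1}^N \bg_i(\sx{i}^\star) \preceq \rho^\star \1$ together with $\smu{}^\star \succeq \0$, the chain
\begin{align*}
  f^\star = \inf_{\sx{i}\in X_i,\,\rho\ge0} L(\bx,\rho,\smu{}^\star) \le L(\bx^\star,\rho^\star,\smu{}^\star) \le \sum_{i=1}^N f_i(\sx{i}^\star) + M \rho^\star = f^\star
\end{align*}
collapses to equalities. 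Hence $(\bx^\star,\rho^\star)$ minimizes $L(\,\cdot\,,\smu{}^\star)$; by separability the scalar $\rho^\star$ must then minimize $\rho \mapsto \rho\,(M - (\smu{}^\star)^\top \1)$ over $\rho \ge 0$. Since the coefficient is strictly positive, this map is uniquely minimized at $\rho = 0$, whence $\rho^\star = 0$ for every optimal solution. The strictness $M > \| \smu{}^\star \|_1$ is essential here: were the coefficient zero, $\rho^\star$ would be left unconstrained.

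It then remains to read off the claim. With $\rho^\star = 0$ the relaxed coupling constraint becomes $\sum_{i=1}^N \bg_i(\sx{i}^\star) \preceq \0$, so $(\sx{1}^\star,\ldots,\sx{N}^\star)$ is feasible for~\eqref{eq:primal_original}; moreover its cost equals $\sum_{i=1}^N f_i(\sx{i}^\star) + M \cdot 0 = f^\star$. As $f^\star$ is the optimal value of~\eqref{eq:primal_original}, the tuple is optimal, which yields exactly the announced form $(\sx{1}^\star,\ldots,\sx{N}^\star,0)$.
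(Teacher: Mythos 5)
Your proof is correct, but it takes a genuinely different route from the paper's. The paper recognizes \eqref{eq:primal_relaxed} as the epigraph reformulation of an exact penalty problem $\min_{\sx{i}\in X_i} \sum_i f_i(\sx{i}) + M\max\{0,\sum_i \bg_{i1}(\sx{i}),\ldots,\sum_i \bg_{iS}(\sx{i})\}$ and then invokes the exact penalty theorem \cite[Proposition~5.25]{bertsekas1982constrained} with $c=M$ to conclude that this problem and \eqref{eq:primal_original} have the same optimal solutions. You instead give a self-contained saddle-point argument: using Lemma~\ref{lem:relaxation_equivalence} (optimality of $\smu{}^\star$ for \eqref{eq:dual_restricted}) and strong duality from Lemma~\ref{lem:primal_relaxation}, you show that any primal optimal pair of \eqref{eq:primal_relaxed} must minimize the Lagrangian $L(\cdot,\cdot,\smu{}^\star)$, and the separability of $L$ in $\rho$ together with the strictly positive coefficient $M-(\smu{}^\star)^\top\1>0$ forces $\rho^\star=0$; feasibility and optimality for \eqref{eq:primal_original} then follow directly. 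Your chain of inequalities is sound (the middle step uses $\smu{}^\star\succeq\0$ and $\sum_i\bg_i(\sx{i}^\star)\preceq\rho^\star\1$ correctly, and the identity $\|\smu{}^\star\|_1=(\smu{}^\star)^\top\1$ is exactly where nonnegativity of the multiplier enters), and it proves precisely the direction of the proposition that is used later in the convergence proof. What the paper's approach buys is brevity via a citation and, implicitly, the stronger two-way statement that the solution sets coincide; what yours buys is an elementary, fully explicit argument that makes transparent why the strict inequality $M>\|\smu{}^\star\|_1$ is needed, which the paper's proof leaves hidden inside the cited proposition.
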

\begin{proof}
  First, we notice that problem~\eqref{eq:primal_relaxed}
  is the epigraph formulation of
  \begin{align}
  \begin{split}
    \min_{ \sx{1},\ldots,\sx{N}}  &\! \sum_{i=1}^N f_i ( \sx{i} )
    \!+\! M \max \! \Big\{  0,  \sum_{i=1}^N \bg_{i1} (\sx{i}),\ldots, \sum_{i=1}^N \bg_{iS} (\sx{i}) \! \Big \}
    \\[1.5ex]
    \subj \: & \: \sx{i} \in X_i, \: i \in\until{N},
  \end{split}
  \label{eq:epigraph_problem}
  \end{align}
  where $\bg_{is}$ denotes the $s$-th component of $\bg_{i}$.
  Problems~\eqref{eq:primal_original} and~\eqref{eq:epigraph_problem} enjoy the
  same structure as the ones considered in \cite[Proposition~5.25]{bertsekas1982constrained}.
  By Assumption~\ref{ass:constraints_qualification},
  problem~\eqref{eq:primal_relaxed} (and thus \eqref{eq:epigraph_problem})
  satisfies the assumptions for strong duality, thus we can invoke
  \cite[Proposition~5.25]{bertsekas1982constrained},
  with $c=M$, to conclude that
  problems~\eqref{eq:primal_original} and~\eqref{eq:epigraph_problem} have the
  same optimal solutions, thus completing the proof.%
\end{proof}

\begin{remark}[Alternative restrictions]
  Other choices for the restriction of the domain $\smu{} \succeq \0$ of~\eqref{eq:dual_original} 
  can be considered. 
  For instance, one can consider upper bounds in the form $\smu{} \preceq M\1$ or
  $\smu{} \preceq [M_1, \ldots, M_S]^\top$.
  As one might expect, the specific constraint restriction gives rise to different
  forms of the relaxed primal problem~\eqref{eq:primal_relaxed}.~\oprocend
\end{remark}

\subsection{Second Dual Problem Derivation}
\label{subsec:second_dual}
At this point, we continue our duality tour in order to design an algorithm
that solves problem~\eqref{eq:dual_restricted} instead of the unrestricted
dual problem~\eqref{eq:dual_original}.

In order to make problem~\eqref{eq:dual_restricted} amenable for a distributed
solution, we enforce a sparsity structure that matches the network.  To this
end, we introduce copies of the common optimization variable $\smu{}$ and we
copy also its domain. Moreover, we enforce coherence constraints among the
copies $\smu{i}$ having the sparsity of the connected graph $\GG$, thus obtaining
\begin{align}
\begin{split}
  \max_{\smu{1}, \ldots, \smu{N} } \: & \: \sum_{i=1}^N q_i (\smu{i} )
  \\
  \subj \: & \: \smu{i} \succeq \0, \smu{i}^\top \1 \le M, \hspace{0.30cm} i \in\until{N}
  \\[1.2ex]
  & \: \smu{i} = \smu{j}, \hspace{1.75cm} (i,j) \in \EE.
\end{split}
\label{eq:restricted_dual_problem_copies}
\end{align}
Being problem~\eqref{eq:restricted_dual_problem_copies} an equivalent version of
problem~\eqref{eq:dual_restricted}, it has the same optimal cost $q^\star = f^\star$. 

On problem~\eqref{eq:restricted_dual_problem_copies} we would like to use a dual
decomposition approach with the aim of obtaining a distributed algorithm.  That
is, the leading idea is to derive the dual of
problem~\eqref{eq:restricted_dual_problem_copies}\! and apply a subgradient method
to solve it.

We start deriving the dual problem of~\eqref{eq:restricted_dual_problem_copies}
by dualizing only the coherence constraints. Consider the partial
Lagrangian
\begin{align}
  \LL (\smu{1},\ldots,  \smu{N}, \sLambda ) 
 = \!
 \sum_{i=1}^N \! \Big( q_i( \smu{i} ) 
 + \!\!
 \sum_{j\in\nbrs_i} \! \slambda{ij}^{\!\top} (\smu{i} - \smu{j} ) \! \Big),
\label{eq:lagrangian_dual_copies}
\end{align}
where $\sLambda\in \real^{S\cdot |\EE|}$ is the vector stacking each Lagrange
multiplier $\slambda{ij} \in \real^S$, with $(i,j)\in \EE$, associated to the
constraint $\smu{i} - \smu{j} = \0$. 
Notice that we have not dualized the local constraints
$\{ \smu{i} \succeq \0 \mid \smu{i}^\top \1 \le M\}$.

Since the communication graph $\GG$ is undirected, we can exploit
the symmetry of the constraints. Indeed, for each $(i,j)\in\EE$ we also have
$(j,i) \in\EE$, and, expanding all the terms in~\eqref{eq:lagrangian_dual_copies}, 
for given $i$ and $j$, we always have both
the terms $\slambda{ij}^\top (\smu{i} - \smu{j} )$ and
$\slambda{ji}^\top (\smu{j} - \smu{i} )$.
Thus, after some simple algebraic manipulations, we can 
rephrase~\eqref{eq:lagrangian_dual_copies} as
$\LL(\smu{1}, \ldots, \smu{N}, \sLambda ) = 
  \sum_{i=1}^N \Big( q_i( \smu{i} ) + \smu{i}^\top \sum_{j\in\nbrs_i} (\slambda{ij} - \slambda{ji})  \Big)$,
which is separable with respect to $\smu{i}$, $i\in\until{N}$.
Thus, the dual function of~\eqref{eq:restricted_dual_problem_copies} is
\begin{align}
\begin{split}
  \eta( \sLambda ) & \!=
  \sum_{i=1}^N \eta_i \big( \{ \slambda{ij},\slambda{ji} \}_{j\in\nbrs_i} \big),
\end{split}
\label{eq:eta_definition}
\end{align}
where, for all $i\in\until{N}$,
\begin{align}
  \! \! \! \eta_i \big( \! \{ \slambda{ij},\slambda{ji} \}_{j\in\nbrs_i} \big) \! = \!\!\!
    \sup_{  \substack{ \smu{i} \succeq \0, \\ \smu{i}^\top \1 \le M } } \!\!\!
      \Big( \! q_i (\smu{i})   \! + \! 
        \smu{i}^\top \! \!  \sum_{j\in\nbrs_i} \!\! (\slambda{ij}  \! -  \!  \slambda{ji} )
    \! \Big).
\label{eq:eta_i_definition}
\end{align}
Finally, by denoting the domain of $\eta$ as
  $\domainEta = \{ \sLambda \in \real^{S \cdot |\EE|} \mid \eta( \sLambda ) < +\infty \}$,
the dual of problem~\eqref{eq:restricted_dual_problem_copies} reads
\begin{align}
\label{eq:dual_dual}
  \min_{ \sLambda \in \domainEta } \: \eta ( \sLambda  ) = 
  \min_{ \sLambda \in \domainEta } \:
  \sum_{i=1}^N
  \eta_i \big( \{\slambda{ij},\slambda{ji}\}_{j\in\nbrs_i} \big).
\end{align}

Since problem~\eqref{eq:dual_dual} is a dual program, then it is a 
convex (constrained) problem. 
Moreover, its cost 
function $\eta ( \sLambda  )$ is very structured since it is a sum of contributions 
$\eta_i$ and each of them depends only on neighboring variables.
In the next lemma we characterize the domain of
problem~\eqref{eq:dual_dual}.
\begin{lemma}
\label{lem:dual_dual_domain}
  The domain $\domainEta$ of $\eta$ in~\eqref{eq:eta_definition} is
  $\real^{S \cdot |\EE|}$, thus optimization
  problem~\eqref{eq:dual_dual} is unconstrained.
\end{lemma}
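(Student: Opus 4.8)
The plan is to show that the inner supremum defining each $\eta_i$ in~\eqref{eq:eta_i_definition} is finite for \emph{every} choice of the multipliers $\{\slambda{ij},\slambda{ji}\}_{j\in\nbrs_i}$. Since $\eta(\sLambda)=\sum_{i=1}^N\eta_i$, finiteness of each summand for all arguments immediately forces $\eta(\sLambda)<+\infty$ for all $\sLambda$, and hence $\domainEta=\real^{S\cdot|\EE|}$. Fixing $i\in\until{N}$ and arbitrary multipliers, I would abbreviate the constant vector $\mathbf{c}_i\triangleq\sum_{j\in\nbrs_i}(\slambda{ij}-\slambda{ji})\in\real^S$, so that the objective of the supremum becomes $q_i(\smu{i})+\smu{i}^\top\mathbf{c}_i$ and the feasible set is $U_M\triangleq\{\smu{i}\in\real^S\mid\smu{i}\succeq\0,\ \smu{i}^\top\1\le M\}$. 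The two facts I need are that $U_M$ is nonempty and compact and that the objective is a real-valued continuous function of $\smu{i}$; finiteness of $\eta_i$ then follows from the Weierstrass extreme value theorem.

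First I would argue that $q_i$, as defined in~\eqref{eq:qi_definition}, is finite for \emph{every} $\smu{i}\in\real^S$. By Assumption~\ref{ass:primal_regualirity}, $f_i$ and each component of $\bg_i$ are finite-valued convex functions on $\real^{n_i}$, hence continuous, so for fixed $\smu{i}$ the map $\sx{i}\mapsto f_i(\sx{i})+\smu{i}^\top\bg_i(\sx{i})$ is continuous. Since $X_i$ is nonempty and compact, the minimum in~\eqref{eq:qi_definition} is attained and finite, whence $q_i:\real^S\to\real$ is everywhere real-valued. Being a pointwise infimum of affine functions of $\smu{i}$, $q_i$ is concave, and a finite-valued concave function on all of $\real^S$ is continuous; adding the affine term $\smu{i}^\top\mathbf{c}_i$ preserves continuity.

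Next I would note that $U_M$ contains $\smu{i}=\0$ (nonempty), is closed as an intersection of closed half-spaces, and is bounded because $\smu{i}\succeq\0$ together with $\smu{i}^\top\1\le M$ forces $0\le\mu_{is}\le M$ componentwise; hence $U_M$ is compact. A continuous function attains its maximum on a nonempty compact set, so the supremum in~\eqref{eq:eta_i_definition} is finite (in fact a maximum). Therefore $\eta_i(\{\slambda{ij},\slambda{ji}\}_{j\in\nbrs_i})<+\infty$ for arbitrary multipliers, and summing over $i$ gives $\eta(\sLambda)<+\infty$ for every $\sLambda\in\real^{S\cdot|\EE|}$, so $\domainEta=\real^{S\cdot|\EE|}$ and problem~\eqref{eq:dual_dual} is unconstrained.

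I do not anticipate a serious obstacle; the only point requiring care is justifying that $q_i$ is finite and continuous on all of $\real^S$, rather than merely on the nonnegative orthant where the multiplier lives. This hinges precisely on the compactness of $X_i$ (which guarantees the inner minimum is attained for every $\smu{i}$, including those with negative components that could arise in intermediate estimates) together with the automatic continuity of finite-valued convex functions on $\real^{n_i}$. Everything else is a routine application of Weierstrass' theorem to a continuous objective over the compact set $U_M$.
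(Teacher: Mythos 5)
Your proof is correct and follows essentially the same route as the paper's: show each $\eta_i$ is finite by observing that $q_i$ (a minimum over the compact set $X_i$ of a function affine in $\smu{i}$) is a finite concave, hence continuous, function, and that the supremum in~\eqref{eq:eta_i_definition} is taken over the compact set $\{\smu{i}\succeq\0,\ \smu{i}^\top\1\le M\}$, so Weierstrass applies. Your version merely spells out more explicitly the finiteness and continuity of $q_i$ on all of $\real^S$ and the compactness of the feasible set, which the paper asserts more tersely.
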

\begin{proof}
  We show that each $\eta_i \big( \{\slambda{ij},\slambda{ji}\}_{j\in\nbrs_i} \big)$ is finite for
  all $\{\slambda{ij},\slambda{ji}\}_{j\in\nbrs_i}$.
  Each function $q_i(\smu{i})$ is concave on its domain $\smu{i}\succeq \0$ 
  for all $i\in\until{N}$. In fact,
  from the definition of $q_i$ in~\eqref{eq:qi_definition}, we notice
  that it is obtained as minimization over a nonempty compact set $X_i$ of the function 
  $f_i ( \sx{i} ) + \smu{i}^\top \bg_i (\sx{i})$. Such a function is concave (in fact linear) in $\smu{i}$, thus, 
  following the proof of \cite[Proposition~5.1.2]{bertsekas1999nonlinear}, we can conclude that every
  $q_i$ is concave over its convex domain $\smu{i}\succeq \0$.
  For each $i\in\until{N}$, the function $\eta_i$ as
  defined in~\eqref{eq:eta_i_definition} is obtained by maximizing a (concave)
  continuous function ($q_i$ plus a linear term) over a compact set 
  and, thus,
  has always a finite value, so that the proof follows. %
\end{proof}

It is worth noting that Lemma~\ref{lem:dual_dual_domain} strongly
relies on the compactness of
$\{ \smu{i} \succeq \0 \mid \smu{i}^\top \1 \le M\}$. This means that without
the primal relaxation, $\domainEta$ is not guaranteed to be
$\real^{S \cdot |\EE|}$. In Section~\ref{subsec:on_the_relaxation}, we better
clarify this aspect.

Next we characterize the optimization problem~\eqref{eq:dual_dual}.
\begin{lemma} \label{lem:dual_dual}
  Let $M$ be such that $M > \| \smu{}^\star\|_1$ with
  $\smu{}^\star$ an optimal solution of the dual of problem~\eqref{eq:primal_original}.
  Problem~\eqref{eq:dual_dual} has a
  bounded optimal cost, call it $\eta^\star$, and
  a nonempty optimal solution set.
  Moreover, it enjoys strong duality with~\eqref{eq:restricted_dual_problem_copies}.
  Also, it holds $\eta^\star = f^\star$,
  where $f^\star$ is the optimal solution of the original primal problem~\eqref{eq:primal_original}.
\end{lemma}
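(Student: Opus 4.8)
The plan is to reduce all four assertions to two facts: that the maximization problem~\eqref{eq:restricted_dual_problem_copies} has optimal value $f^\star$ (finite and attained), and that strong duality holds between~\eqref{eq:restricted_dual_problem_copies} and its Lagrangian dual~\eqref{eq:dual_dual}, the latter being the point where the compactness produced by the relaxation is essential. To pin down the value of~\eqref{eq:restricted_dual_problem_copies}, I would first use that $\GG$ is connected: the coherence constraints $\smu{i}=\smu{j}$, $(i,j)\in\EE$, force all copies to coincide, so~\eqref{eq:restricted_dual_problem_copies} is equivalent to the restricted dual~\eqref{eq:dual_restricted}, whose optimal cost is $q^\star=f^\star$ by Lemma~\ref{lem:relaxation_equivalence}; in particular it is finite. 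Attainment follows from compactness of the feasible set (a product of the compact sets $\{\smu{i}\succeq\0 \mid \smu{i}^\top\1\le M\}$ intersected with the closed consensus subspace) together with continuity of each $q_i$ on this set, a property already used in the proof of Lemma~\ref{lem:dual_dual_domain}.

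Next I would invoke strong duality between~\eqref{eq:restricted_dual_problem_copies} and~\eqref{eq:dual_dual}. The key structural remark is that in forming the Lagrangian~\eqref{eq:lagrangian_dual_copies} only the \emph{affine} coherence constraints are dualized, while the compact convex sets $U_i=\{\smu{i}\succeq\0 \mid \smu{i}^\top\1\le M\}$ are retained as an abstract domain. For convex programs whose dualized constraints are linear, the constraint qualification reduces to exhibiting a point in the relative interior of the abstract domain that satisfies those constraints. Choosing $\smu{i}=\tfrac{M}{2S}\1$ for every $i$ gives $\smu{i}\succ\0$ with $\smu{i}^\top\1=M/2<M$, hence a point of $\relint(U_i)$ for each $i$, and these copies clearly satisfy $\smu{i}=\smu{j}$. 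The qualification therefore holds, and the strong duality theorem for convex problems with linear constraints delivers both a zero duality gap and the existence of a dual optimal solution, i.e.\ the nonempty optimal set of~\eqref{eq:dual_dual}.

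Finally I would assemble the conclusions. Weak duality gives $\eta(\sLambda)\ge\sum_{i=1}^N q_i(\smu{i})$ for every $\sLambda$ and every feasible $\{\smu{i}\}$, so $\eta^\star\ge f^\star$; combined with the zero-gap result above, $\eta^\star$ equals the optimal cost of~\eqref{eq:restricted_dual_problem_copies}, namely $q^\star=f^\star$. This yields at once boundedness of $\eta^\star$, the strong-duality claim, and the identity $\eta^\star=f^\star$, while dual attainment supplies the nonempty optimal solution set. I expect the only delicate step to be the clean justification of the constraint qualification: one must make explicit that no Slater-type strict feasibility is required for the linear coherence constraints and that the sole substantive requirement — a relative-interior point of $\prod_{i=1}^N U_i$ lying on the consensus subspace — is met. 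This is precisely where the bound $\smu{i}^\top\1\le M$ granted by the relaxation, and the resulting full-dimensional compactness of each $U_i$, enter the argument.
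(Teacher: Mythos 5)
Your proposal is correct and follows essentially the same route as the paper: both reduce the claim to the equivalence of \eqref{eq:restricted_dual_problem_copies} with the restricted dual \eqref{eq:dual_restricted} (so its optimal value is $q^\star=f^\star$ by Lemma~\ref{lem:relaxation_equivalence}) and then invoke the strong-duality theorem for concave maximization with affine dualized constraints to obtain zero gap and dual attainment, hence a nonempty optimal set for \eqref{eq:dual_dual}. The only difference is in how the hypothesis of that theorem is verified: the paper notes that the retained domain $\prod_{i}\{\smu{i}\succeq\0 \mid \smu{i}^\top\1\le M\}$ is polyhedral, while you exhibit an explicit relative-interior consensus point $\smu{i}=\tfrac{M}{2S}\1$ --- both are admissible sufficient conditions in \cite[Proposition~5.2.1]{bertsekas1999nonlinear}.
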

\begin{proof}
  Since~\eqref{eq:restricted_dual_problem_copies} is equivalent
  to~\eqref{eq:dual_original}, then by
  Lemma~\ref{lem:relaxation_equivalence} its optimal cost is finite and equal to
  $q^\star$. 
  Since each $q_i$ is concave
  as shown in the proof of Lemma~\ref{lem:dual_dual_domain}, then also
  $\sum_{i=1}^N q_i(\smu{i})$ is a concave function of
  $(\smu{1}, \ldots,\smu{N})$. Thus, since the domain
  $\{ \smu{1}, \ldots,\smu{N} \mid \smu{i} \succeq \0 \mid \smu{i}^\top \1 \le M, i\in\until{N}\}$,
  is polyhedral, by
  \cite[Proposition~5.2.1]{bertsekas1999nonlinear} 
  strong duality between problem~\eqref{eq:restricted_dual_problem_copies} and 
  its dual~\eqref{eq:dual_dual} holds, i.e.,
  $\eta^\star$ is finite since it holds $\eta^\star = q^\star$.
  From the same proposition, we have that
  the optimal solution set of~\eqref{eq:dual_dual} is nonempty.
  The equality $\eta^\star = f^\star$ follows
  readily by strong duality between problems~\eqref{eq:primal_original} 
  and~\eqref{eq:dual_original}, which concludes the proof.
\end{proof}

\subsection{Distributed Subgradient Method}
\label{subsec:distributed_subgradient_method}
We detail in this subsection how to explicitly design a distributed dual decomposition
algorithm to solve problem~\eqref{eq:dual_restricted} based on a 
subgradient iteration applied to problem~\eqref{eq:dual_dual}.

Exploiting the separability of $\eta$ in~\eqref{eq:eta_definition}, we recall
how to compute each component of a subgradient
of $\eta$ at a given $\sLambda  \in \real^{S \cdot |\EE|}$,
see e.g., \cite[Section~6.1]{bertsekas1999nonlinear} 
That is, it holds
\begin{align}
\frac{\tilde \partial
   \eta ( \sLambda ) }{ \partial \slambda{ij} }
  = \smu{i}^\star - \smu{j}^\star,
\label{eq:eta_subgradient}
\end{align}
where $\frac{\tilde \partial \eta ( \cdot )}{\partial \slambda{ij}}$ denotes the component
associated to the variable $\slambda{ij}$ of a subgradient of $\eta$, and
\begin{align}
  \smu{k}^\star \in \argmax_{ \smu{k}\succeq \0, \smu{k}^\top \1 \le M} 
  \Big( 
    q_k( \smu{k} ) +
    \smu{k}^\top \sum_{h\in\nbrs_k} (\slambda{kh} - \slambda{hk}) 
  \Big),
\label{eq:mustar_k}
\end{align}
for $k=i,j$.

Having recalled how to compute subgradients of $\eta$, we are ready to summarize
how the subgradient method reads when applied to problem~\eqref{eq:dual_dual}.
At each iteration $t$, each node $i\in\until{N}$:
\begin{enumerate}[label=(S\arabic*),ref=(S\arabic*)]
\item\label{subgrad_alg_s1} receives $\slambda{ji}^t$, $j \in \nbrs_i$, and
  computes $\smu{i}^{t+1}$ as an optimal solution of
  \begin{align}
    \max_{ \smu{i} \succeq \0, \smu{i}^\top \1 \le M} \Big ( q_i( \smu{i} ) +
    \smu{i}^\top \!\! \sum_{j\in\nbrs_i} (\slambda{ij}^t - \slambda{ji}^t ) \Big );
    \label{eq:dual_dual_subgradient}
  \end{align}
  
\item\label{subgrad_alg_s2} receives the updated $\smu{j}^{t+1}$, 
  $j \in \nbrs_i$ and updates $\slambda{ij}$, $j\in\nbrs_i$, via
  \begin{align*}
    \slambda{ij}^{t + 1} = \slambda{ij}^t - \gamma^t (\smu{i}^{t +1} -\smu{j}^{t + 1}),
  \end{align*}
  where $\gamma^t$ is the step-size.
\end{enumerate}

Notice that~\ref{subgrad_alg_s1}--\ref{subgrad_alg_s2} is a distributed algorithm, i.e., 
it can be implemented by means of local computations and communications without 
any centralized step.
However, we want to stress that the algorithm is \emph{not} implementable as it is written, 
since the functions $q_i$ are still not explicitly available. %

The next lemma states a property on the subdifferential of the convex function $\eta$.

\begin{lemma}
\label{lem:bounded_subgradient}
  The subgradients of $\eta$ are uniformly bounded, i.e.,
  there exists a positive constant $C$ such that
  for every $\sLambda \in \real^{S\cdot|\EE|}$ with components
  $\slambda{ij} \in \real^S$, $\forall \, (i,j)\in \EE$, any subgradient $\tilde \partial
  \eta ( \sLambda ) / \partial \sLambda $ satisfies
    $\| \frac{\tilde \partial \eta ( \sLambda ) }{ \partial \sLambda } \| \le C$.
\end{lemma}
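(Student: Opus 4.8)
The plan is to bound each component of the subgradient separately and then combine them. Recall from~\eqref{eq:eta_subgradient} that the component of the subgradient associated to $\slambda{ij}$ is exactly $\smu{i}^\star - \smu{j}^\star$, where $\smu{i}^\star$ and $\smu{j}^\star$ are the maximizers defined in~\eqref{eq:mustar_k}. The crucial observation is that every such maximizer lies in the \emph{same} compact set, namely $\{\smu{} \succeq \0 \mid \smu{}^\top \1 \le M\}$, regardless of the value of $\sLambda$. This is precisely the uniform control that the primal relaxation bought us: the domain of each inner maximization is fixed and bounded independently of the dual iterate.

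**Bounding a single component.**

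First I would bound $\|\smu{i}^\star\|$ for any feasible $\smu{i}^\star$. Since $\smu{i}^\star \succeq \0$ and $(\smu{i}^\star)^\top \1 \le M$, the sum of its (nonnegative) components is at most $M$, so $\|\smu{i}^\star\|_1 \le M$. By equivalence of norms, for instance $\|\smu{i}^\star\|_2 \le \|\smu{i}^\star\|_1 \le M$. Hence by the triangle inequality the norm of a single subgradient component satisfies $\| \smu{i}^\star - \smu{j}^\star \| \le \|\smu{i}^\star\| + \|\smu{j}^\star\| \le 2M$. This bound holds uniformly over all $\sLambda \in \real^{S\cdot|\EE|}$ precisely because the feasible set of~\eqref{eq:mustar_k} does not depend on $\sLambda$.

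**Assembling the full bound.**

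Next I would aggregate across all $|\EE|$ components. Writing the full subgradient as the stacked vector of its components indexed by $(i,j)\in\EE$, we have
\begin{align*}
  \Big\| \frac{\tilde \partial \eta(\sLambda)}{\partial \sLambda} \Big\|^2
  = \sum_{(i,j)\in\EE} \Big\| \frac{\tilde \partial \eta(\sLambda)}{\partial \slambda{ij}} \Big\|^2
  \le \sum_{(i,j)\in\EE} (2M)^2 = 4 M^2 |\EE|,
\end{align*}
so that the claimed constant can be taken as $C = 2M\sqrt{|\EE|}$. Since this estimate is independent of $\sLambda$, the subgradients are uniformly bounded, completing the argument.

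**Main obstacle.**

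The argument is essentially routine once the structure of~\eqref{eq:eta_subgradient} and~\eqref{eq:mustar_k} is in hand; the only conceptual point requiring care is ensuring that the compactness of the constraint set is genuinely uniform in $\sLambda$. The linear perturbation term $\smu{i}^\top \sum_{j\in\nbrs_i}(\slambda{ij}-\slambda{ji})$ in the objective of~\eqref{eq:mustar_k} does vary with $\sLambda$, but it only shifts the objective, never the feasible region; thus the maximizer always stays within the fixed ball of $\ell_1$-radius $M$. I would make this explicit to forestall any worry that large $\sLambda$ could push the optimizer out to infinity — which is exactly the pathology that would arise without the restriction $\smu{}^\top \1 \le M$, as already flagged in the discussion following Lemma~\ref{lem:dual_dual_domain}.
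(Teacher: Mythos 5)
Your proof is correct and follows the same route as the paper's: both reduce to the component formula~\eqref{eq:eta_subgradient} and observe that the maximizers in~\eqref{eq:mustar_k} lie in the fixed compact set $\{\smu{}\succeq\0 \mid \smu{}^\top\1\le M\}$ independently of $\sLambda$. You merely make the bound quantitative ($C=2M\sqrt{|\EE|}$) where the paper leaves it implicit.
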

\begin{proof}
  To prove the lemma, we show that each component 
  $\frac{\tilde \partial \eta ( \sLambda ) }{ \partial \slambda{ij} }$
  of $\frac{\tilde \partial \eta ( \sLambda ) }{ \partial \sLambda }$
  is bounded.
  Using~\eqref{eq:eta_subgradient}, it is sufficient to show that
  ${\smu{i}^\star}$ and ${\smu{j}^\star}$, associated to the given $\sLambda$, 
  are uniformly bounded and, hence, their difference.
  Since, from equation~\eqref{eq:mustar_k} the two are obtained as maxima of a
  concave function over a compact domain, the proof follows.%
\end{proof}

\section{Algorithm Analysis: Convergence Proof}
\label{sec:analysis}
This section is devoted to prove the convergence of the \algacronym/ distributed algorithm
formally stated in Theorem~\ref{thm:convergence}. 

\subsection{Preparatory Results} %
\label{subsec:intermediate_results}
We give two intermediate results that represent building
blocks for the convergence proof given in Section~\ref{subsec:convergence_proof}.
The next lemma is instrumental to the second one.
\begin{lemma} \label{lem:inner_maximization_dual}
Consider the optimization problem
\begin{align}
\begin{split}
  \max_{\smu{i} } \: & \:
     f_i (\sx{i}) + \smu{i}^{\top}\Big(
        \bg_i (\sx{i}) + \sum_{j\in\nbrs_i} ( \slambda{ij}^t - \slambda{ji}^t ) \Big)
  \\
  \subj \: & \: \smu{i}\succeq \0,  \smu{i}^\top \1 \le M,
\end{split}
\label{eq:inner_maximization_problem}
\end{align}
with given $\sx{i}$, $\slambda{ij}^t$ and $\slambda{ji}^t$,
$j\in\nbrs_i$, and $M>0$.
Then, its dual problem is
\begin{align}
\begin{split}
  \min_{\rho_{i}} \: & \: f_i (\sx{i}) + M \rho_{i}
  \\
  \subj \:
   & \: \rho_{i} \ge 0
  \\
   & \: \bg_i( \sx{i} ) +\sum_{j\in\nbrs_i} (\slambda{ij}^t - \slambda{ji}^t ) \preceq \rho_{i} \1,
\end{split}
\label{eq:rho_formulation}
\end{align}%
and strong duality holds.
\end{lemma}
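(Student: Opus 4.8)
The plan is to recognize problem~\eqref{eq:inner_maximization_problem} as a linear program in $\smu{i}$ and to construct its dual by dualizing only the scalar constraint $\smu{i}^\top \1 \le M$, while keeping the sign constraint $\smu{i} \succeq \0$ inside the domain. First I would regard $\sx{i}$ and the neighboring variables $\{\slambda{ij}^t, \slambda{ji}^t\}_{j\in\nbrs_i}$ as fixed, so that $f_i(\sx{i})$ is a constant and the vector
\[
  \mathbf{v}_i \triangleq \bg_i(\sx{i}) + \sum_{j\in\nbrs_i}(\slambda{ij}^t - \slambda{ji}^t) \in \real^S
\]
is a fixed datum. Then~\eqref{eq:inner_maximization_problem} is the maximization of the affine (hence concave) function $f_i(\sx{i}) + \smu{i}^\top \mathbf{v}_i$ over the nonempty, compact polyhedron $\{\smu{i} \succeq \0 \mid \smu{i}^\top \1 \le M\}$, which is nonempty because $\smu{i} = \0$ is feasible.

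Next I would attach a scalar multiplier $\rho_i \ge 0$ to the constraint $M - \smu{i}^\top \1 \ge 0$ and form the partial Lagrangian $L(\smu{i}, \rho_i) = f_i(\sx{i}) + \smu{i}^\top \mathbf{v}_i + \rho_i(M - \smu{i}^\top \1)$, leaving $\smu{i} \succeq \0$ undualized. Rearranging gives $L(\smu{i}, \rho_i) = f_i(\sx{i}) + M\rho_i + \smu{i}^\top(\mathbf{v}_i - \rho_i \1)$, so the dual function is
\[
  g(\rho_i) = f_i(\sx{i}) + M\rho_i + \sup_{\smu{i} \succeq \0} \smu{i}^\top(\mathbf{v}_i - \rho_i \1).
\]
The key observation is the case analysis of the inner supremum: it equals $0$ (attained at $\smu{i} = \0$) when $\mathbf{v}_i - \rho_i \1 \preceq \0$, and $+\infty$ otherwise, since any positive coordinate of $\mathbf{v}_i - \rho_i \1$ lets the corresponding component of $\smu{i}$ be driven to $+\infty$. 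Hence $g(\rho_i)$ is finite exactly on $\{\rho_i \ge 0 \mid \mathbf{v}_i \preceq \rho_i \1\}$, where it equals $f_i(\sx{i}) + M\rho_i$. Minimizing $g$ over $\rho_i \ge 0$ and substituting back the definition of $\mathbf{v}_i$ yields precisely problem~\eqref{eq:rho_formulation}.

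For strong duality, I would argue exactly as in the proof of Lemma~\ref{lem:dual_dual}: the objective is concave (affine) and the domain $\{\smu{i} \succeq \0 \mid \smu{i}^\top \1 \le M\}$ is polyhedral, so strong duality follows from~\cite[Proposition~5.2.1]{bertsekas1999nonlinear}; moreover, boundedness and attainment of the primal optimal value are immediate from compactness of the feasible set. I do not expect a genuine obstacle here, as the argument is a routine duality computation. The only points requiring care are fixing the correct sign of the multiplier $\rho_i$ in the Lagrangian and performing the case analysis of the inner supremum correctly, since it is exactly this analysis that forces the inequality $\mathbf{v}_i \preceq \rho_i \1$ to reappear as the coupling constraint of the dual~\eqref{eq:rho_formulation}.
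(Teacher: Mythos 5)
Your proposal is correct and follows essentially the same route as the paper: it dualizes only the constraint $\smu{i}^\top \1 \le M$ with a multiplier $\rho_i \ge 0$, performs the same case analysis of the inner supremum over $\smu{i} \succeq \0$ to recover the constraint $\bg_i(\sx{i}) + \sum_{j\in\nbrs_i}(\slambda{ij}^t - \slambda{ji}^t) \preceq \rho_i \1$, and justifies strong duality via the feasibility and compactness of the polyhedral feasible set of the linear program. The only cosmetic difference is the citation used for strong duality, which does not change the argument.
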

\begin{proof}
  First, since $\sx{i}$, $\slambda{ij}^t$ and $\slambda{ji}^t$ are given,
  problem~\eqref{eq:inner_maximization_problem} is a feasible \emph{linear}
  program (the box constraint is nonempty) with compact domain.  Thus, both
  problem~\eqref{eq:inner_maximization_problem} and its dual have finite optimal
  cost and strong duality holds.

  In order to show that~\eqref{eq:rho_formulation} is the dual of~\eqref{eq:inner_maximization_problem}, 
  we introduce a multiplier $\rho_{i} \ge 0$ associated to the constraint $ M - \smu{i}^\top \1 \ge 0$.
  Then the dual function of~\eqref{eq:inner_maximization_problem}
  is defined as $\max_{ \smu{i} \succeq 0 } f_i (\sx{i}) + M \rho_{i}
    + \smu{i}^\top \Big( \bg_i( \sx{i} ) 
      +\!\!\sum_{j\in\nbrs_i} (\slambda{ij}^t - \slambda{ji}^t ) -\rho_{i}\1 \Big)$.
  It is equal to $f_i (\sx{i}) + M \rho_{i}$ if 
  $\bg_i( \sx{i} ) +\sum_{j\in\nbrs_i} (\slambda{ij}^t - \slambda{ji}^t ) - \rho_{i} \1 \preceq \0$ 
  and $+\infty$ otherwise. 
  Finally, the minimization of the dual function
  on its domain with respect to $\rho_{i} \ge 0$ gives problem~\eqref{eq:rho_formulation} and 
  concludes the proof. %
\end{proof}

In the following, we propose a technique to make step~\eqref{eq:dual_dual_subgradient}
explicit.
By plugging in~\eqref{eq:dual_dual_subgradient} the definition of $q_i$, given
in~\eqref{eq:qi_definition}, the following max-min optimization problem is
obtained:
\begin{align}
  \max_{ \substack{ \smu{i} \succeq \0, \\ \smu{i}^\top \1 \le M }}
  \min_{\sx{i} \in X_i} \!\!
  \bigg( f_i ( \sx{i} ) \!+\! \smu{i}^\top \! \Big( 
    \bg_i( \sx{i} ) \!+ \! \!\! \sum_{j\in\nbrs_i} \! ( \slambda{ij}^t \!-\! \slambda{ji}^t ) 
  \! \Big) \!\! \bigg).
\label{eq:maxmin}
\end{align}

The next lemma allows us to recast problem~\eqref{eq:maxmin} in a more convenient
formulation from a computational point of view.

\begin{lemma}
  Consider the optimization problem
  \begin{align}
    \begin{split}
    \min_{ \sx{i}, \rho_{i} } \: & \: f_i (\sx{i}) + M \rho_{i}
    \\
    \subj \: & \: \rho_{i} \ge 0, \:\: \sx{i} \in X_i
    \\
    & \: 
    \bg_i (\sx{i} ) + \sum_{j\in\nbrs_i}  \big( \slambda{ij}^t - \slambda{ji}^t \big) \preceq \rho_{i}\1.
    \end{split}
  \label{eq:alg_minimization_lemma}
  \end{align}
  A finite primal-dual optimal solution pair of~\eqref{eq:alg_minimization_lemma}, call it
  $\big( ( \sx{i}^{t+1}, \rho_{i}^{t+1} ), \smu{i}^{t+1} \big)$, does exist and
  $( \sx{i}^{t+1}, \smu{i}^{t+1} )$ is a solution of~\eqref{eq:maxmin}.
\label{lem:dual_minmax_equivalence}
\end{lemma}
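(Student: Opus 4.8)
The plan is to read \eqref{eq:maxmin} as a convex-concave saddle problem and to connect it to \eqref{eq:alg_minimization_lemma} by interchanging the max and the min, after which the inner problem is exactly the one resolved by Lemma~\ref{lem:inner_maximization_dual}. Abbreviating $\bg_i^t(\sx{i}) \triangleq \bg_i(\sx{i}) + \sum_{j\in\nbrs_i}(\slambda{ij}^t-\slambda{ji}^t)$, let $L(\sx{i},\smu{i}) \triangleq f_i(\sx{i}) + \smu{i}^\top\bg_i^t(\sx{i})$ be the payoff of \eqref{eq:maxmin}, optimized over $\sx{i}\in X_i$ and $\smu{i}\in U \triangleq \{\smu{i}\succeq\0 \mid \smu{i}^\top\1\le M\}$. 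First I would record the structural facts that drive the argument: for fixed $\smu{i}\in U$ the map $\sx{i}\mapsto L(\sx{i},\smu{i})$ is convex on $X_i$ (here $\smu{i}\succeq\0$ and the component-wise convexity of $\bg_i$ from Assumption~\ref{ass:primal_regualirity} are used), while for fixed $\sx{i}$ the map $\smu{i}\mapsto L(\sx{i},\smu{i})$ is linear, hence concave; moreover $X_i$ and $U$ are nonempty, compact and convex, and $L$ is continuous.

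The first main step is the interchange $\max_{\smu{i}\in U}\min_{\sx{i}\in X_i} L = \min_{\sx{i}\in X_i}\max_{\smu{i}\in U} L$, which I would obtain from a standard minimax theorem for a continuous convex-concave function on a product of compact convex sets. The second step applies Lemma~\ref{lem:inner_maximization_dual}: for each fixed $\sx{i}\in X_i$ the inner problem $\max_{\smu{i}\in U} L(\sx{i},\smu{i})$ is precisely \eqref{eq:inner_maximization_problem}, whose value equals that of its dual \eqref{eq:rho_formulation}, i.e. $\min\{f_i(\sx{i})+M\rho_{i} \mid \rho_{i}\ge 0,\ \bg_i^t(\sx{i})\preceq\rho_{i}\1\}$. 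Substituting this on the right-hand side of the interchange and merging the two nested minimizations (over $\sx{i}$ and over $\rho_{i}$) into a single one reproduces exactly \eqref{eq:alg_minimization_lemma}; hence \eqref{eq:maxmin} and \eqref{eq:alg_minimization_lemma} have the same, finite optimal value.

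Next I would argue existence of a finite primal-dual pair. Eliminating $\rho_{i}$ recasts \eqref{eq:alg_minimization_lemma}, by the same epigraph reformulation leading to \eqref{eq:epigraph_problem}, as the minimization of the continuous function $f_i(\sx{i}) + M\max\{0,\ \max_{s\in\until{S}}(\bg_i^t(\sx{i}))_s\}$ over the compact set $X_i$, so a minimizer $(\sx{i}^{t+1},\rho_{i}^{t+1})$ exists and is finite; the outer maximum in \eqref{eq:maxmin} is likewise attained at a finite $\smu{i}^{t+1}$, being the maximum of the continuous concave objective over the compact set $U$. To obtain the solution correspondence I would observe that \eqref{eq:maxmin} is nothing but the dual of \eqref{eq:alg_minimization_lemma} obtained by dualizing only the coupling constraint with multiplier $\smu{i}$: minimizing the resulting Lagrangian over $\rho_{i}\ge 0$ forces $\smu{i}^\top\1\le M$, and over $\sx{i}\in X_i$ reproduces $q_i(\smu{i})$ plus the linear coupling term. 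Since strong duality holds and both optima are attained, the primal-dual optimal pair is a saddle point of this Lagrangian; its $\smu{i}$-component $\smu{i}^{t+1}$ therefore attains the outer maximum of \eqref{eq:maxmin}, while its $\sx{i}$-component $\sx{i}^{t+1}$ attains the inner minimum $\min_{\sx{i}\in X_i}L(\sx{i},\smu{i}^{t+1})$ --- which is precisely the statement that $(\sx{i}^{t+1},\smu{i}^{t+1})$ solves \eqref{eq:maxmin}.

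I expect the main obstacle to be the bookkeeping around the auxiliary variable $\rho_{i}$ and its multiplier, namely ensuring that the minimization over $\rho_{i}$ contributes exactly the constraint $\smu{i}^\top\1\le M$ and the scaled penalty, so that the Lagrangian saddle point of \eqref{eq:alg_minimization_lemma} projects cleanly onto a saddle point of $L$ over $X_i\times U$. A secondary point needing care is the verification of the minimax theorem's hypotheses --- continuity of $L$ and compactness and convexity of both $X_i$ and $U$ --- all of which are supplied by Assumption~\ref{ass:primal_regualirity} together with the restriction $\smu{i}^\top\1\le M$ furnished by the relaxation.
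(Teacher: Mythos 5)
Your proposal is correct and follows essentially the same route as the paper: both arguments hinge on Lemma~\ref{lem:inner_maximization_dual} to trade the inner optimization over $\rho_i$ for the maximization over $\smu{i}$, and on a minimax interchange over the compact convex sets $X_i$ and $\{\smu{i}\succeq\0 \mid \smu{i}^\top\1\le M\}$; you simply run the chain in the reverse direction (from~\eqref{eq:maxmin} to~\eqref{eq:alg_minimization_lemma}). The only minor difference is that the paper obtains existence of the primal-dual pair by verifying Slater's condition for~\eqref{eq:alg_minimization_lemma} (taking $\rho_i$ large), whereas you argue attainment via compactness and the explicit Lagrangian; both are valid.
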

\begin{proof}
  Problem~\eqref{eq:alg_minimization_lemma} is a feasible convex program, in fact 
  $f_i (\sx{i}) + M \rho_{i}$ is convex, 
  the set $X_i$ is nonempty, convex and compact,
  the constraint $\rho_{i} \ge 0$ is convex as well as the inequality constraint 
  $\bg_i (\sx{i} ) + \sum_{j\in\nbrs_i} \big( \slambda{ij}^t - \slambda{ji}^t \big) - \rho_{i}\1 \preceq \0$.
  Then, by choosing a sufficiently large $\rho_{i}$, we can show that the Slater's 
  constraint qualification is satisfied and, thus, strong duality holds. 
  Therefore, a primal-dual optimal solution pair
  $(\sx{i}^{t+1}, \rho_{i}^{t+1},\smu{i}^{t+1} )$ of~\eqref{eq:alg_minimization_lemma} exists. 
  Moreover, problem~\eqref{eq:alg_minimization_lemma} can be recast as
  \begin{align*}
    \min_{ \sx{i} \in X_i } \!\! \Bigg(\!
    \min_{ \rho_{i} \ge 0,  \: \bg_i (\sx{i} ) +\sum\limits_{j\in\nbrs_i}
      (\slambda{ij}^t - \slambda{ji}^t ) \preceq \rho_{i}\1  }
      f_i ( \sx{i} ) + M \rho_{i} \!
    \Bigg) \! .
  \end{align*}
  
  By Lemma~\ref{lem:inner_maximization_dual}, we can substitute 
  the inner minimization with its equivalent dual maximization obtaining
  \begin{align}
    \min_{\sx{i} \in X_i}
    \max_{\substack{ \smu{i} \succeq \0, \\ \smu{i}^\top \1 \le M }} \!\!
    \bigg( \! f_i ( \sx{i} ) \!+\! \smu{i}^\top \! \Big(
      \bg_i( \sx{i} ) \!+ \! \! \sum_{j\in\nbrs_i} \! ( \slambda{ij}^t \!-\! \slambda{ji}^t ) 
    \! \Big) \!\! \bigg).
  \label{eq:saddle_point_problem}
  \end{align}  

  Let
  $\phi(\sx{i},\smu{i}) \! = \! 
    f_i ( \sx{i} ) \!+\! \smu{i}^\top \! \Big( 
    \bg_i( \sx{i} ) \!+ \! \sum_{j\in\nbrs_i} \! ( \slambda{ij}^t \!-\! \slambda{ji}^t ) 
  \! \Big)$
  and observe that (i) $\phi(\cdot,\smu{i})$ is closed and convex for all
  $\smu{i} \succeq \0$ (affine transformation of a convex function with compact
  domain $X_i$) and (ii) $\phi(\sx{i}, \cdot )$ is closed and concave since it is
  a linear function with compact domain 
  ($\{ \smu{i} \succeq \0 \mid \smu{i}^\top \1 \le M\}$), for
  all $\sx{i}\in\real^S$.
  Thus, we can invoke \cite[Propositions~4.3]{bertsekas2009min}
  to switch $\min$ and $\max$ operators in~\eqref{eq:saddle_point_problem}, and write
  \begin{align}
  \begin{split}
    &
    \min_{\sx{i} \in X_i}
    \max_{ \substack{ \smu{i} \succeq \0, \\ \smu{i}^\top \1 \le M }}
    \!\!
    \bigg( \! f_i ( \sx{i} ) \!+\! \smu{i}^\top \! \Big(
      \bg_i( \sx{i} ) \!+ \! \! \sum_{j\in\nbrs_i} \!\! ( \slambda{ij}^t \!-\! \slambda{ji}^t ) 
    \! \Big) \!\! \bigg)
    \\
    & = \! \! \!
    \max_{ \substack{ \smu{i} \succeq \0, \\ \smu{i}^\top \1 \le M }}
    \min_{\sx{i} \in X_i} \!\!
      \bigg( \! f_i ( \sx{i} ) \!+\! \smu{i} ^\top \! \Big( 
      \bg_i( \sx{i} ) \!+ \! \! \sum_{j\in\nbrs_i} \! \! ( \slambda{ij}^t \!-\! \slambda{ji}^t ) 
    \! \Big) \!\! \bigg).
    \end{split}
  \label{eq:minmax}
  \end{align}
  which is~\eqref{eq:maxmin}, thus concluding the proof. %
\end{proof}

We highlight that problem~\eqref{eq:alg_minimization_lemma} is the local optimization 
step~\eqref{eq:alg_minimization} in the \algacronym/ distributed algorithm.

Finally, the next corollary makes a connection between the optimal cost of $i$-th
problem~\eqref{eq:alg_minimization_lemma} and the value of the $i$-th local term
$\eta_i$ (defined in \eqref{eq:eta_i_definition}) of the second dual function
$\eta$ (defined in \eqref{eq:eta_definition}).
\begin{corollary}
  Let  $( \sx{i}^{t+1}, \rho_{i}^{t+1} )$ be a solution of~\eqref{eq:alg_minimization_lemma}
  with given $\slambda{ij}^t$ and $\slambda{ji}^t$ for $j\in\nbrs_i$. Then
  \begin{align}
    \eta_i \big( \{ \slambda{ij}^t ,\slambda{ji}^t \}_{j\in\nbrs_i} \big) 
    = f_i ( \sx{i}^{t+1} ) + M \rho_{i}^{t+1},
    \label{eq:optimal_eta_frho}
  \end{align}
  with $\eta_i$ defined in~\eqref{eq:eta_i_definition}.
  \label{cor:optimal_eta_frho}
\end{corollary}
\begin{proof}
  In the proof of Lemma~\ref{lem:dual_minmax_equivalence}, we have shown that
  condition~\eqref{eq:minmax} holds for all $t \ge 0$. Its left hand side has
  optimal cost $f_i ( {\sx{i}}^{t+1} ) + M\rho_{i}^{t+1} $, while the one of the
  right hand side is exactly the definition of
  $\eta_i \big( \{ \slambda{ij}^t ,\slambda{ji}^t \}_{j\in\nbrs_i} \big) $
  in~\eqref{eq:eta_i_definition}. Thus, equation~\eqref{eq:minmax} can be
  rewritten as
  \begin{align*}
    f_i ( \sx{i}^{t+1} ) + M\rho_{i}^{t+1} =
    \eta_i \big( \{ \slambda{ij}^t ,\slambda{ji}^t \}_{j\in\nbrs_i} \big),
  \end{align*}
  for all $i\in\until{N}$, concluding the proof.%
\end{proof}

\subsection{Proof of Theorem~\ref{thm:convergence}}
\label{subsec:convergence_proof}

  To prove statement \emph{(i)}, we show that the
  \algacronym/ distributed algorithm is an operative way to implement the
  subgradient method \ref{subgrad_alg_s1}--\ref{subgrad_alg_s2}
  and, that \ref{subgrad_alg_s1}--\ref{subgrad_alg_s2} solves
  problem~\eqref{eq:dual_restricted}.
  
  First, let $\{ \smu{i}^t \}_{t\ge 0}$ and 
  $\{\slambda{ij}^t \}_{t\ge 0}$, $j \in \nbrs_i$,
  be the auxiliary sequences generated by the \algacronym/ distributed 
  algorithm associated to 
  $\{ ( \sx{i}^t , \rho_{i}^t ) \}_{t\ge 0}$,
  for each $i\in \until{N}$.
  From Lemma~\ref{lem:dual_minmax_equivalence}, 
  a primal-dual optimal solution pair 
  $\big( ( \sx{i}^{t}, \rho_{i}^{t} ), \smu{i}^{t} \big)$ 
  of~\eqref{eq:alg_minimization} in fact exists at each iteration $t$, 
  so that the algorithm is well-posed.
  Second, to show that \algacronym/ implements
  \ref{subgrad_alg_s1}--\ref{subgrad_alg_s2} we notice that
  update~\eqref{eq:alg_update} and~\ref{subgrad_alg_s2} are trivially identical.
  As for \ref{subgrad_alg_s1}, we have shown in the discussion after
  Lemma~\ref{lem:inner_maximization_dual}, that equation~\eqref{eq:maxmin} is
  an explicit expression for~\eqref{eq:dual_dual_subgradient} in
  \ref{subgrad_alg_s1}. Thus, by invoking
  Lemma~\ref{lem:dual_minmax_equivalence}, we can conclude that finding the dual
  part of a primal-dual optimal solution pair of~\eqref{eq:alg_minimization}
  corresponds to performing~\ref{subgrad_alg_s1}. Therefore, the sequences
  $\{\slambda{ij}^t\}_{t\ge 0}$, $(i,j) \in \EE$ generated by \algacronym/ and
  by \ref{subgrad_alg_s1}--\ref{subgrad_alg_s2} coincide.
  Third, we show that \algacronym/ solves problem~\eqref{eq:dual_dual}. 
  By Lemma~\ref{lem:dual_dual} the optimal solution set of~\eqref{eq:dual_dual} is nonempty 
  and by Lemma~\ref{lem:bounded_subgradient} the subgradients of $\eta$ are uniformly bounded.
  Since the step-size $\gamma^t$ satisfies Assumption~\ref{ass:step-size}, we can
  invoke~\cite[Proposition~3.2.6]{bertsekas2015convex}
  to conclude that the 
  sequence $\{\slambda{ij}^t \}_{t\ge 0}$, $(i,j) \in \EE$ generated by
  \algacronym/ (or equivalently by \ref{subgrad_alg_s1}--\ref{subgrad_alg_s2})
  converges to an optimal solution of~\eqref{eq:dual_dual}.
  Then,
  we use~\eqref{eq:optimal_eta_frho} in Corollary~\ref{cor:optimal_eta_frho} and
  take the limit as $t\to\infty$, thus obtaining
  \begin{align*}
    \lim_{t\to\infty} \sum_{i=1}^N \! \Big(f_i ( \sx{i}^{t+1} ) + M \rho_{i}^{t+1} \Big)
    & = \lim_{t\to\infty} \sum_{i=1}^N \! \eta_i \big( \{
      \slambda{ij}^t ,\slambda{ji}^t \}_{j\in\nbrs_i} \big) 
\\
&    = \eta^\star = f^\star,
  \end{align*}
  where the last equality follows by Lemma~\ref{lem:dual_dual}, so that the proof 
  of the first statement is complete. 
  To prove statement~\emph{(ii)}, i.e., the primal recovery property, we
  start by studying the properties of the aggregated vector
  $(\sx{1}^t,\ldots,\sx{N}^t,\rho_{1}^t,\ldots,\rho_{N}^t)$.
  By construction, for all $t\ge0$ each pair $(\sx{i}^t,\rho_{i}^t)$ satisfies $\sx{i}^t \in X_i$,
  $\rho_i^t \ge 0$ and
  $    \bg_i ( \sx{i}^t ) + \sum_{j\in\nbrs_i} \big( \slambda{ij}^{t-1} - \slambda{ji}^{t-1} \big)
    \preceq \rho_{i}^t\1$.
  Summing over $i\in\until{N}$ the previous condition, it follows that
  \begin{align}
    \sum_{i=1}^N \bg_i(\sx{i}^t ) + \sum_{i=1}^N 
    \sum_{j\in\nbrs_i} \big( \slambda{ij}^{t-1} - \slambda{ji}^{t-1} \big)
    \preceq \sum_{i=1}^N \rho_{i}^t \1,
  \label{eq:proof_coupling_with_lambda}
  \end{align}
  for all $t\ge 0$.
  Let us denote by $a_{ij}$ the $(i,j)$-th entry of the adjacency matrix associated
  to the undirected graph $\GG$. Then, we can write
  \begin{align*}
    \begin{split}
    \sum_{i=1}^N
    \sum_{j\in\nbrs_i} (\slambda{ij}^t - \slambda{ji}^t )
    & \stackrel{(a)}{=}
    \sum_{i=1}^N \sum_{j=1}^N a_{ij}(\slambda{ij}^t - \slambda{ji}^t )
    \\
    & %
    = \sum_{i=1}^N \sum_{j=1}^N a_{ij} \slambda{ij}^t
    - 
    \sum_{i=1}^N \sum_{j=1}^N a_{ij} \slambda{ji}^t
    \stackrel{(b)}{=} 0,    
    \end{split}
  \end{align*}
  where (a) follows by writing the sum over neighboring agents in terms of the
  adjacency matrix, while (b) holds since $\GG$ is undirected (so that
  $a_{ij} = a_{ji}$ for all $(i,j)\in\EE$), which implies that the two
  summations in the second line are identical for all $t\ge 0$.
  Hence, equation~\eqref{eq:proof_coupling_with_lambda} reduces to
  \begin{align}
    \sum_{i=1}^N \bg_i ( \sx{i}^t )
    \preceq \sum_{i=1}^N \rho_{i}^t \1,
  \label{eq:proof_aggregate_feasibility_condition}
  \end{align}
  for all $t\ge 0$.
  Equation~\eqref{eq:proof_aggregate_feasibility_condition} shows that for all
  $t\ge 0$ the aggregate vector 
  $(\sx{1}^t,\ldots,\sx{N}^t,\rho_{1}^t,\ldots,\rho_{N}^t)$
  is feasible for the following optimization problem
  \begin{align}
  \begin{split}
    \min_{\substack{ \sx{1},\ldots,\sx{N} \\ \rho_{1},\ldots,\rho_{N} }}
    \: & \: \sum_{i=1}^N f_i ( \sx{i} ) + M \sum_{i=1}^N \rho_{i}
    \\[0.2cm]
    \subj \: & \: \rho_{i} \ge 0, \:\: \sx{i} \in X_i, \:\: i \in\until{N}
    \\
    & \: \sum_{i=1}^N \bg_i (\sx{i}) \preceq \sum_{i=1}^N \rho_{i} \1.
  \end{split}
  \label{eq:proof_problem}
  \end{align}
  
  Notice that, by defining $\rho = \sum_{i=1}^N \rho_{i}$, 
  problem~\eqref{eq:proof_problem} is equivalent to problem~\eqref{eq:primal_relaxed}. 
  Thus, at each iteration $t$ the point
  $(\sx{1}^t,\ldots,\sx{N}^t,\sum_{i=1}^N \rho_{i}^t)$ is feasible for
  problem~\eqref{eq:primal_relaxed}. The equivalence also shows
  that $\rho_{i}$ is not a copy of $\rho$, but it is the $i$-th contribution 
  to $\rho$.
  
  We now show that every limit point of the sequence 
  $\{\sx{1}^t,\ldots,\sx{N}^t, \sum_{i=1}^N \rho_{i}^t\}_{t\ge0}$ 
  is feasible for problem~\eqref{eq:primal_relaxed}. %
  By construction, each $\sx{i}^t \in X_i$ for all $i\in\until{N}$, so that $\{\sx{i}^t\}_{t\ge0}$ is bounded. 
  Moreover, from the statement \emph{(i)} of the theorem, also the sequence $\{ \sum_{i=1}^N \rho_{i}^t\}_{t\ge0}$ is 
  bounded since $\{ \sum_{i=1}^N f_i ( \sx{i}^t ) + M \sum_{i=1}^N \rho_{i}^t\}_{t\ge 0}$ converges
  to a finite value $f^\star$.
  Since the sequence of vectors
  $\{ (\sx{1}^t,\ldots,\sx{N}^t,\sum_{i=1}^N \rho_{i}^t) \}_{t\ge 0}$ is
  bounded, then there exists a sub-sequence of indices
  $\{ t_n \}_{n\ge 0} \subseteq \{ t\}_{t\ge 0}$ such that the sub-sequence
  $\{ (\sx{1}^{t_n},\ldots,\sx{N}^{t_n},\sum_{i=1}^N \rho_{i}^{t_n}) \}_{n \ge
    0}$
  converges to a limit point $( \barsx{1},\ldots, \barsx{N}, \bar{\rho} )$.
  From the first statement of the theorem we have that
  $( \barsx{1},\ldots, \barsx{N}, \bar{\rho} )$ satisfies 
  \begin{align*}
    \sum_{i=1}^N f_i ( \barsx{i}) + M \bar{\rho} = f^\star.
  \end{align*}
  Moreover, since each component of $\bg_i$ is a (finite) convex function over $\real^{n_i}$, it is also 
  continuous over any compact subset of $\real^{n_i}$. Thus, by taking the limit
  as $n \to \infty$ in~\eqref{eq:proof_aggregate_feasibility_condition} with $t= t_n$,
  it also holds
  \begin{align}
    \sum_{i=1}^N \bg_i(\barsx{i}) \preceq \bar{\rho} \, \1.
    \label{eq:asymptotic_coupling_constraint}
  \end{align}
  By Proposition~\ref{prop:relaxed_solution_structure} it must hold that
  $(\barsx{1}, \ldots, \barsx{N}, \bar{\rho}) = (\barsx{1}, \ldots, \barsx{N},
  0)$,
  i.e., $\bar{\rho} = 0$.  Thus, \eqref{eq:asymptotic_coupling_constraint} holds
  with $\bar{\rho} = 0$ and, thus, guarantees that every limit point of
  $(\sx{1}^t,\ldots,\sx{N}^t)$ is feasible for the (not relaxed) coupling
  constraint in the original problem~\eqref{eq:primal_original} and thus optimal
  for that problem.  So that the proof follows.

\subsection{Discussion on the Necessity of the Relaxation}
\label{subsec:on_the_relaxation}

In this subsection we show how our approach reads when no dual restriction 
is applied. This will further highlight the strength of the proposed strategy.

Suppose we do not restrict the original dual problem~\eqref{eq:dual_original}, but we still apply the same formal 
derivation given in the previous sections.
Then, the counterpart of~\eqref{eq:eta_definition} is
$ \eta^{ \texttt{NR} }( \sLambda ) 
  = 
  \sum_{i=1}^N \eta_i^{ \texttt{NR} } \big( \{ \slambda{ij},\slambda{ji} \}_{j\in\nbrs_i} \big)
$
with 
$$
  \eta_i^{ \texttt{NR} } \big( \{ \slambda{ij},\slambda{ji} \}_{j\in\nbrs_i} \big) 
  =
  \sup_{ \smu{i} \succeq \0 } 
    \Big( q_i (\smu{i}) 
    +
    \smu{i}^{\!\top}  \sum_{j\in\nbrs_i} \! (\slambda{ij}  \! -  \!  \slambda{ji} )
  \Big),
$$
for all $i\in\until{N}$.
Finally, by denoting the domain of $\eta^{ \texttt{NR} } $ as $\domainEtaRelaxed$,
we have that the counterpart of problem~\eqref{eq:dual_dual} is
$\min_{ \sLambda \in D_{ \sLambda}^{ \texttt{NR} }  } \eta^{ \texttt{NR} } ( \sLambda )$.
We notice that such a problem is a constrained minimization since,
differently from the relaxed case, the domain $\domainEtaRelaxed$
does not always coincide with the entire space $\real^{S \cdot |\EE|}$
(Cf. Lemma~\ref{lem:dual_dual_domain}).  Thus, to apply the subgradient method
we need to adapt~\ref{subgrad_alg_s1}--\ref{subgrad_alg_s2} by appending an
additional projection step, i.e.,
$\sLambda^{t+1} = \big[ \tildesLambda{}^{t+1} \big]_{ \domainEtaRelaxed  }$,
where each component $\tildeslambda{ij}^{t+1}$ of $\tildesLambda{}^{t+1}$ is the 
result of \ref{subgrad_alg_s2} and
$[\,\cdot\,]_{\domainEtaRelaxed }$ denotes the Euclidean projection
onto $\domainEtaRelaxed$. Notice that the projection onto
$\domainEtaRelaxed$ of the entire $\tildesLambda{}^{t+1}$ prevents
the distributed implementation of the algorithm.

It is worth noting that, being the set $\{ \smu{i}\in\real^S \mid \smu{i}\succeq \0 \}$ not
compact, then Lemma~\ref{lem:dual_minmax_equivalence} would not hold.
The theoretical issue is that switching $\min$ and $\max$ operators
in~\eqref{eq:minmax} may not be possible due to the non-compact domains
(Cf. \cite[Propositions~4.3]{bertsekas2009min}.
Moreover, differently from the relaxed case, 
Lemma~\ref{lem:bounded_subgradient} does not hold anymore so that
no guarantees about the boundedness of subgradients of $\eta^{\texttt{NR}}$ can 
be established. Thus, the convergence result about the centralized subgradient 
method cannot be invoked.

\section{Application to Distributed Microgrid Control}
\label{sec:simulations}

In this section we present a computational study of our \algacronym/ distributed
algorithm tailored to an instance of a distributed MPC scheme for microgrid
control.

\subsection{Microgrid Model}

A microgrid consists of generators, controllable loads, storage devices and a connection to the main grid \cite{zamora2010controls}.
Generators are collected in the set $\GEN$. At each time instant $\tau$ in a
given horizon $[0,T]$, they generate power $p_{\gen,i}^{\tau}$
that must satisfy magnitude and rate bounds, i.e., 
for all $i \in \GEN$, $\ubar{p} \le p_{\gen,i}^{\tau} \le \bar{p}$, with
$\tau \in [0,T]$, and
$\ubar{r} \le p_{\gen,i}^{\tau+1} - p_{\gen,i}^{\tau} \le \bar{r}$, with
$\tau \in [0,T-1]$, 
for given positive scalars
$\ubar{p}$, $\bar{p}$, $\ubar{r}$ and $\bar{r}$.
The cost to produce power by a generator is modeled as a quadratic function 
$f_{ \gen,i }^\tau = \alpha_1 p_{\gen ,i}^\tau + \alpha_2 (p_{\gen ,i}^\tau)^2$ for 
some $\alpha_1 > 0$  and $\alpha_2> 0$.
Storage devices are collected in $\STOR$ and their power 
$p_{\stor,i}^{\tau}$ satisfies bounds and a dynamical constraint given by
$-d_{\stor} \le p_{\stor,i}^{\tau} \le c_{\stor}$, $\tau \in [0,T]$,
$ q_{\stor,i}^{\tau+1} = q_{\stor,i}^{\tau} + p_{\stor ,i}^{\tau}$,
$\tau \in [0,T-1]$, and $0 \le q_{\stor,i}^{\tau} \le q_{\text{max}}$,
$\tau \in [0,T]$, 
where the initial capacity $q_{\stor ,i}^{0}$ is given and 
$d_{\stor }$, $c_{\stor }$ and $q_{\text{max}}$ are positive scalars.
There are no costs associated with the stored power.
Controllable loads are collected in $\CLOAD$ and their power is denoted by
$p_{\cload,i}^{\tau}$. A desired load profile $p_{\des,i}^\tau$ for $p_{\cload,i}^\tau$ is given and 
the controllable load incurs in a cost $f_{ \cload,i }^\tau =\beta \max \{0, p_{\des,i}^\tau  - p_{\cload,i}^\tau \}$, $\beta \ge0$,
if the desired load is not matched.
Finally, the device $i=N$ is the connection node with the main grid; its power
$p_{\trade}^{\tau}$ must satisfy
$| p_{\trade }^{\tau} | \le E$, $\tau \in [0,T]$.
The power trading cost is modeled as $f_{\trade}^\tau = -c_1 p_{\trade }^{\tau} + c_2 |p_{\trade }^{\tau} | $, 
with $c_1>0$ and $c_2>0$ being respectively the price and
the transaction cost.

The power network must satisfy a given power demand $D^\tau$ modeled by a
coupling constraint among the units
\begin{align*}
  \sum_{i\in \GEN} p_{\gen,i}^{\tau} +
  \!\!\!
  \sum_{i\in \STOR} p_{\stor,i}^{\tau} +
  \!\!\!
  \sum_{i\in \CLOAD} p_{\cload,i}^{\tau}
  +
  p_{\trade}^{\tau}- D^\tau
  \!\!= 0,
\end{align*}
for all $\tau\in[0,T]$. Reasonably, we assume $D^\tau$ to be known only by the
connection node $\trade$.

\subsection{Numerical Results}
\label{subsec:numerical_results}
We consider a heterogeneous network of $N = 10$ units with
$4$ generators, $3$ storage devices, $2$ controllable loads 
and $1$ connection to the main grid.
We assume that in the distributed MPC scheme each unit predicts its power 
generation strategy over a horizon of $T = 12$ slots.
In order to fit the microgrid control problem in our set-up, we let each $\sx{i}$ be 
the whole trajectory over the prediction horizon $[0,T]$, e.g.,
$  \sx{i}  = 
  \begin{bmatrix}
    p_{\gen ,i}^{0},\ldots, p_{\gen ,i}^{T}
  \end{bmatrix}{}^{\!\!\top}$,
for all the generators $i \in \GEN$ and, consistently, for the other device types.
As for the cost functions we define $f_i(\sx{i}) = \sum_{\tau=0}^T f_{\gen, i}^\tau (p_{\gen,i}^\tau)$
for $i \in \GEN$ and, similarly, for the other device types. 
The local (rate) bounds and the dynamics give rise to 
local constraints $X_i$.

In Figure~\ref{fig:constraint_violation} we show the algorithmic evolution of
the sum of the penalty parameters $\rho_i^t$ and the maximum violation of the
coupling constraint at each iteration $t$. 
As claimed in Theorem~\ref{thm:convergence},
$\sum_{i=1}^N \rho_i^t$ asymptotically goes to zero.
In this particular instance we also notice that, after the very first iterations, the
generated points are strictly feasible for the coupling constraints and hit 
the boundary in the limit from the interior.
We point out that feasibility of the coupling constraints is obtained during the transient, 
even if some $\rho_i^t$ are positive, so that the algorithm would not work
without the relaxation strategy.
In Figure~\ref{fig:tracking} we show how $\sum_{j\in\nbrs_i} ( \slambda{ij}^t - \slambda{ji}^t )$
compares with the unknown part of the coupling constraints of each agent $i$, 
namely $\sum_{j\neq i} \bg_j  (\bx_j^t)$.
The picture highlights that $\sum_{j\in\nbrs_i} ( \slambda{ij}^t - \slambda{ji}^t )$ 
actually distributedly ``tracks''  the maximum of the contribution in the coupling constraint 
due to all the other agents $j\neq i$ in the network. 

\begin{figure}[ht]\centering
  \begin{minipage}[b]{.48\linewidth}
  \centering
    \includegraphics[scale=1]{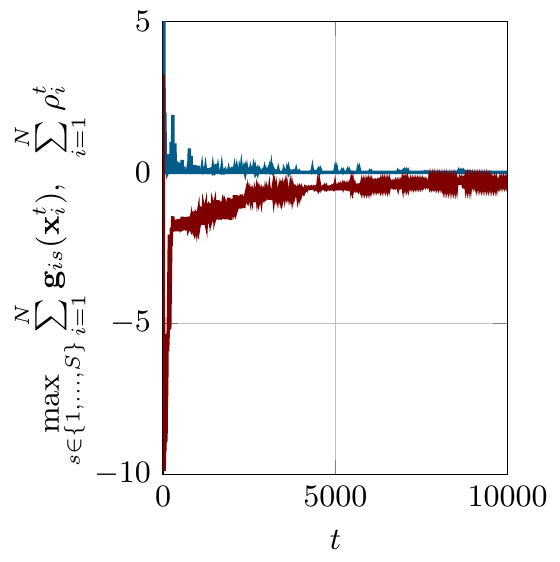}
  \caption{
      Evolution of the maximum violation of coupling constraints
      (red)
    and the sum of local violations %
    (blue).
}
  \label{fig:constraint_violation}
\end{minipage}
\hfill
\begin{minipage}[b]{.48\linewidth}
  \centering
  \includegraphics[scale=1]{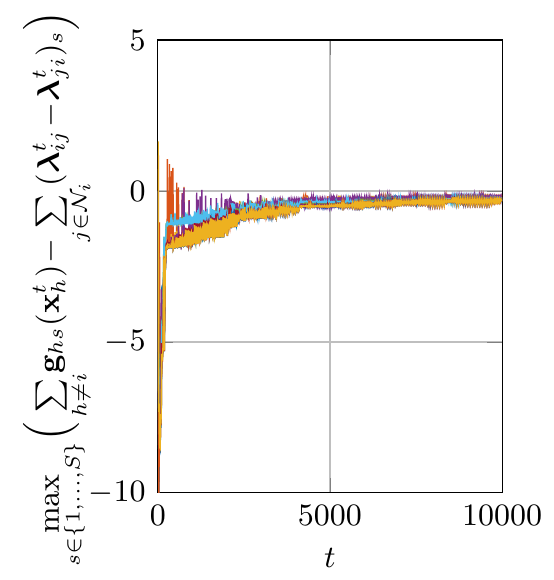}
  \caption{
    Evolution of the ``constraint tracking'' error of the coupling constraint for all $i\in\until{N}$.
  }
  \label{fig:tracking}
\end{minipage}
\end{figure}\vspace*{-1ex}

Finally, in Figure~\ref{fig:convergence} it is shown the convergence rate of the
distributed algorithm, i.e., the difference between the centralized optimal
cost $\eta^\star = f^\star$ and the sum of the local costs $\sum_{i=1}^N ( f_i(\bx_{i}^t) + M\rho_i^t)$,
normalized by $|f^\star|$. It can be seen that the proposed algorithm converges 
in a nonmonotone fashion to the optimal cost with a sublinear rate, as expected for 
a subgradient method.\vspace*{-1.5ex}
\begin{figure}[htbp]
\centering
  \includegraphics[scale=1]{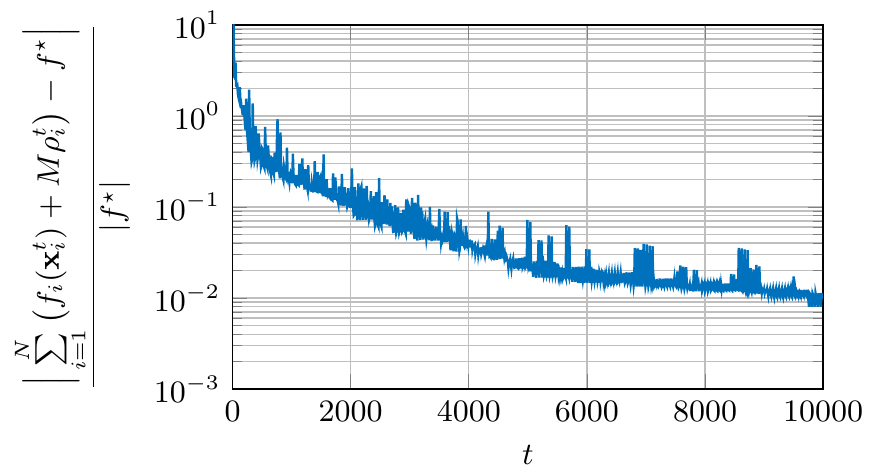}
  \caption{Evolution of the cost error
    $| \sum_{i=1}^N \big( f_i( \bx_{i}^t) + M\rho_i^t \big) - f^\star|/|f^\star|$.
    }\vspace*{-2.5ex}
\label{fig:convergence}
\end{figure}

\section{Conclusions}
\label{sec:conclusions}
In this paper we have proposed a novel distributed method to solve
constraint-coupled convex optimization problems in which a separable cost
function is minimized subject to both local constraints involving one component of
the decision vector and coupling constraints involving all the components. While the
algorithm has a very simple structure (a local optimization and a linear
update), its analysis involves a relaxation approach and a deep tour into
duality theory showing both the convergence to the optimal cost and the 
primal recovery. In particular, this last property allows each node to compute its
portion of the optimal solution without resorting to any averaging
mechanism, which is instead commonly required in methods based on 
dual decomposition. 
Numerical computations on an instance of a cooperative Distributed Model Predictive
Control scheme in smart microgrids have corroborated the theoretical results.
\vspace*{-1.5ex}

\section*{Acknowledgment}
The authors would like to thank Andrea Camisa for the deep discussions and 
useful suggestions.
\vspace*{-1.0ex}

\begin{small}
  \bibliographystyle{IEEEtran}
  \bibliography{distributed_dual_dual}

\begin{thebibliography}{10}
\providecommand{\url}[1]{#1}
\csname url@samestyle\endcsname
\providecommand{\newblock}{\relax}
\providecommand{\bibinfo}[2]{#2}
\providecommand{\BIBentrySTDinterwordspacing}{\spaceskip=0pt\relax}
\providecommand{\BIBentryALTinterwordstretchfactor}{4}
\providecommand{\BIBentryALTinterwordspacing}{\spaceskip=\fontdimen2\font plus
\BIBentryALTinterwordstretchfactor\fontdimen3\font minus
  \fontdimen4\font\relax}
\providecommand{\BIBforeignlanguage}[2]{{%
\expandafter\ifx\csname l@#1\endcsname\relax
\typeout{** WARNING: IEEEtran.bst: No hyphenation pattern has been}%
\typeout{** loaded for the language `#1'. Using the pattern for}%
\typeout{** the default language instead.}%
\else
\language=\csname l@#1\endcsname
\fi
#2}}
\providecommand{\BIBdecl}{\relax}
\BIBdecl

\bibitem{notarnicola2017duality}
I.~Notarnicola and G.~Notarstefano, ``A duality-based approach for distributed
  optimization with coupling constraints,'' in \emph{{IFAC} {W}orld
  {C}ongress}, 2017, pp. 14\,891--14\,896.

\bibitem{nedic2009distributed}
A.~Nedi\'{c} and A.~Ozdaglar, ``Distributed subgradient methods for multi-agent
  optimization,'' \emph{IEEE Trans. on Autom. Control}, vol.~54, no.~1, pp.
  48--61, 2009.

\bibitem{nedic2010constrained}
A.~Nedi{\'c}, A.~Ozdaglar, and P.~A. Parrilo, ``Constrained consensus and
  optimization in multi-agent networks,'' \emph{IEEE Trans. on Autom. Control},
  vol.~55, no.~4, pp. 922--938, 2010.

\bibitem{duchi2012dual}
J.~C. Duchi, A.~Agarwal, and M.~J. Wainwright, ``Dual averaging for distributed
  optimization: Convergence analysis and network scaling,'' \emph{IEEE Trans.
  on Autom. Control}, vol.~57, no.~3, pp. 592--606, 2012.

\bibitem{zhu2012distributed}
M.~Zhu and S.~Mart{\'\i}nez, ``On distributed convex optimization under
  inequality and equality constraints,'' \emph{IEEE Trans. on Autom. Control},
  vol.~57, no.~1, pp. 151--164, 2012.

\bibitem{jakovetic2014fast}
D.~Jakovetic, J.~Xavier, and J.~M. Moura, ``Fast distributed gradient
  methods,'' \emph{IEEE Trans. on Autom. Control}, vol.~59, no.~5, pp.
  1131--1146, 2014.

\bibitem{nedic2015distributed}
A.~Nedi{\'c} and A.~Olshevsky, ``Distributed optimization over time-varying
  directed graphs,'' \emph{IEEE Trans. on Autom. Control}, vol.~60, no.~3, pp.
  601--615, 2015.

\bibitem{dilorenzo2016next}
P.~Di~Lorenzo and G.~Scutari, ``Next: In-network nonconvex optimization,''
  \emph{IEEE Trans. on Signal and Information Processing over Networks},
  vol.~2, no.~2, pp. 120--136, 2016.

\bibitem{varagnolo2016newton}
D.~Varagnolo, F.~Zanella, A.~Cenedese, G.~Pillonetto, and L.~Schenato,
  ``{N}ewton-{R}aphson consensus for distributed convex optimization,''
  \emph{IEEE Trans. on Autom. Control}, vol.~61, no.~4, pp. 994--1009, 2016.

\bibitem{notarnicola2017asynchronous}
I.~Notarnicola and G.~Notarstefano, ``Asynchronous distributed optimization via
  randomized dual proximal gradient,'' \emph{IEEE Trans. on Autom. Control},
  vol.~62, no.~5, pp. 2095--2106, 2017.

\bibitem{christofides2013distributed}
P.~D. Christofides, R.~Scattolini, D.~M. de~la Pe{\~n}a, and J.~Liu,
  ``Distributed model predictive control: A tutorial review and future research
  directions,'' \emph{Computers \& Chemical Engin.}, vol.~51, pp. 21--41, 2013.

\bibitem{bertsekas1989parallel}
D.~P. Bertsekas and J.~N. Tsitsiklis, \emph{Parallel and distributed
  computation: numerical methods}.\hskip 1em plus 0.5em minus 0.4em\relax
  Prentice Hall Englewood Cliffs, NJ, 1989, vol.~23.

\bibitem{palomar2006tutorial}
D.~P. Palomar and M.~Chiang, ``A tutorial on decomposition methods for network
  utility maximization,'' \emph{IEEE J. on Selected Areas in Communications},
  vol.~24, no.~8, pp. 1439--1451, 2006.

\bibitem{necoara2011parallel}
I.~Necoara, V.~Nedelcu, and I.~Dumitrache, ``Parallel and distributed
  optimization methods for estimation and control in networks,'' \emph{Journal
  of Process Control}, vol.~21, no.~5, pp. 756--766, 2011.

\bibitem{giselsson2013accelerated}
P.~Giselsson, M.~D. Doan, T.~Keviczky, B.~De~Schutter, and A.~Rantzer,
  ``Accelerated gradient methods and dual decomposition in distributed model
  predictive control,'' \emph{Automatica}, vol.~49, no.~3, pp. 829--833, 2013.

\bibitem{dinh2013dual}
Q.~Tran-Dinh, I.~Necoara, and M.~Diehl, ``A dual decomposition algorithm for
  separable nonconvex optimization using the penalty function framework,'' in
  \emph{IEEE Conf. on Decision and Control (CDC)}, 2013, pp. 2372--2377.

\bibitem{necoara2015linear}
I.~Necoara and V.~Nedelcu, ``On linear convergence of a distributed dual
  gradient algorithm for linearly constrained separable convex problems,''
  \emph{Automatica}, vol.~55, pp. 209--216, 2015.

\bibitem{tran2016fast}
Q.~Tran-Dinh, I.~Necoara, and M.~Diehl, ``Fast inexact decomposition algorithms
  for large-scale separable convex optimization,'' \emph{Optimization},
  vol.~65, no.~2, pp. 325--356, 2016.

\bibitem{hours2016parametric}
J.-H. Hours and C.~N. Jones, ``A parametric nonconvex decomposition algorithm
  for real-time and distributed {N}{M}{P}{C},'' \emph{IEEE Trans. on Autom.
  Control}, vol.~61, no.~2, pp. 287--302, 2016.

\bibitem{houska2016augmented}
B.~Houska, J.~Frasch, and M.~Diehl, ``An augmented {L}agrangian based algorithm
  for distributed nonconvex optimization,'' \emph{SIAM J. on Optimization},
  vol.~26, no.~2, pp. 1101--1127, 2016.

\bibitem{chatzipanagiotis2017convergence}
N.~Chatzipanagiotis and M.~M. Zavlanos, ``On the convergence of a distributed
  augmented {L}agrangian method for non-convex optimization,'' \emph{IEEE
  Trans. on Autom. Control}, vol.~62, no.~9, pp. 4405--4420, 2017.

\bibitem{lakshmanan2008decentralized}
H.~Lakshmanan and D.~P. De~Farias, ``Decentralized resource allocation in
  dynamic networks of agents,'' \emph{SIAM J. on Optimization}, vol.~19, no.~2,
  pp. 911--940, 2008.

\bibitem{simonetto2012regularized}
A.~Simonetto, T.~Keviczky, and M.~Johansson, ``A regularized saddle-point
  algorithm for networked optimization with resource allocation constraints,''
  in \emph{IEEE Conf. on Decision and Control (CDC)}, 2012, pp. 7476--7481.

\bibitem{necoara2013random}
I.~Necoara, ``Random coordinate descent algorithms for multi-agent convex
  optimization over networks,'' \emph{IEEE Trans. on Autom. Control}, vol.~58,
  no.~8, pp. 2001--2012, 2013.

\bibitem{cherukuri2015distributed}
A.~Cherukuri and J.~Cort{\'e}s, ``Distributed generator coordination for
  initialization and anytime optimization in economic dispatch,'' \emph{IEEE
  Trans. on Control of Network Systems}, vol.~2, no.~3, pp. 226--237, 2015.

\bibitem{chang2014distributed}
T.-H. Chang, A.~Nedi{\'c}, and A.~Scaglione, ``Distributed constrained
  optimization by consensus-based primal-dual perturbation method,'' \emph{IEEE
  Trans. on Autom. Control}, vol.~59, no.~6, pp. 1524--1538, 2014.

\bibitem{simonetto2016primal}
A.~Simonetto and H.~Jamali-Rad, ``Primal recovery from consensus-based dual
  decomposition for distributed convex optimization,'' \emph{J. of Optim.
  Theory and Applications}, vol. 168, no.~1, pp. 172--197, 2016.

\bibitem{falsone2017dual}
A.~Falsone, K.~Margellos, S.~Garatti, and M.~Prandini, ``Dual decomposition for
  multi-agent distributed optimization with coupling constraints,''
  \emph{Automatica}, vol.~84, pp. 149--158, 2017.

\bibitem{mateos2017distributed}
D.~Mateos-N{\'u}{\~n}ez and J.~Cort{\'e}s, ``Distributed saddle-point
  subgradient algorithms with laplacian averaging,'' \emph{IEEE Trans. on
  Autom. Control}, vol.~62, no.~6, pp. 2720--2735, 2017.

\bibitem{notarnicola2016dsm}
I.~Notarnicola, M.~Franceschelli, and G.~Notarstefano, ``A duality-based
  approach for distributed min-max optimization with application to demand side
  management,'' in \emph{IEEE Conf. on Decision and Control (CDC)}, 2016, pp.
  1877--1882.

\bibitem{nedic2009approximate}
A.~Nedi{\'c} and A.~Ozdaglar, ``Approximate primal solutions and rate analysis
  for dual subgradient methods,'' \emph{SIAM J. on Optimization}, vol.~19,
  no.~4, pp. 1757--1780, 2009.

\bibitem{burger2014polyhedral}
M.~B{\"u}rger, G.~Notarstefano, and F.~Allg{\"o}wer, ``A polyhedral
  approximation framework for convex and robust distributed optimization,''
  \emph{IEEE Trans. on Autom. Control}, vol.~59, no.~2, pp. 384--395, 2014.

\bibitem{burger2013non}
------, ``From non-cooperative to cooperative distributed {MPC}: A simplicial
  approximation perspective,'' in \emph{{E}uropean {C}ontrol {C}onference
  ({ECC})}, 2013, pp. 2795--2800.

\bibitem{lorenzen2013distributed}
M.~Lorenzen, M.~B{\"u}rger, G.~Notarstefano, and F.~Allg{\"o}wer, ``A
  distributed solution to the adjustable robust economic dispatch problem,''
  \emph{IFAC Proceedings Volumes}, vol.~46, no.~27, pp. 75--80, 2013.

\bibitem{bertsekas1999nonlinear}
D.~P. Bertsekas, \emph{Nonlinear programming}.\hskip 1em plus 0.5em minus
  0.4em\relax Athena scientific, 1999.

\bibitem{bertsekas1982constrained}
------, \emph{Constrained optimization and {L}agrange multiplier
  methods}.\hskip 1em plus 0.5em minus 0.4em\relax Academic Press, 1982.

\bibitem{bertsekas2009min}
------, ``Min common/max crossing duality: a geometric view of conjugacy in
  convex optimization,'' \emph{Tech. Rep. LIDS-P-2796}, 2009.

\bibitem{bertsekas2015convex}
------, \emph{Convex Optimization Algorithms}.\hskip 1em plus 0.5em minus
  0.4em\relax Athena Scientific, 2015.

\bibitem{zamora2010controls}
R.~Zamora and A.~K. Srivastava, ``Controls for microgrids with storage: Review,
  challenges, and research needs,'' \emph{Renewable and Sustainable Energy
  Reviews}, vol.~14, no.~7, pp. 2009--2018, 2010.

\end{thebibliography}
\end{small}

\end{document}